\let\originalleft\left
\let\originalright\right
\renewcommand{\left}{\mathopen{}\mathclose\bgroup\originalleft}
\renewcommand{\right}{\aftergroup\egroup\originalright}
\def\QEDbox{\begingroup\unitlength\p@\linethickness{.4\p@}\framebox(6,6){}\endgroup}
\newcommand*{\myparagraph}[1]{\vskip12pt\noindent{\normalfont\normalsize\bfseries#1}.\,}
\DeclareFontShape{T1}{lmr}{bx}{sc} { <-> ssub * cmr/bx/sc }{}
\newcommandx*\disguisetext[3][1=c]{\makebox[\widthof{#2}][#1]{#3}}
\newcommandx*\disguisemath[3][1=c]{%
	\mathchoice{%
		\makebox[\widthof{$\displaystyle#2$}][#1]{$\displaystyle#3$}%
	}{%
		\makebox[\widthof{$\textstyle#2$}][#1]{$\textstyle#3$}%
	}{%
		\makebox[\widthof{$\scriptstyle#2$}][#1]{$\scriptstyle#3$}%
	}{%
		\makebox[\widthof{$\scriptscriptstyle#2$}][#1]{$\scriptscriptstyle#3$}%
	}%
}
\setlist{noitemsep}
\setlist[enumerate,1]{label=(\alph*)}
\setlist[enumerate,2]{label=(\roman*)}
\newtheorem{theorem}{Theorem}
\newtheorem{corollary}[theorem]{Corollary}
\newtheorem{lemma}[theorem]{Lemma}
\newtheorem{definition}[theorem]{Definition}
\newtheorem{claim}[theorem]{Claim}
\let\@pklingPr\Pr
  \xdef\pr{\protect\expandafter\noexpand\csname pr \endcsname}
\gdef\csname pr \endcsname#1{\mathinner
        {\@pklingPr({\mathcode`\|32768\let|\midvert #1})}}
  \xdef\Pr{\protect\expandafter\noexpand\csname Pr \endcsname}
\gdef\csname Pr \endcsname#1{\@pklingPr\left(%
     \ifx\SavedDoubleVert\relax \let\SavedDoubleVert\|\fi
     {\let\|\SetDoubleVert
     \mathcode`\|32768\let|\SetVert
     #1}\right)}
  \xdef\ex{\protect\expandafter\noexpand\csname ex \endcsname}
\gdef\csname ex \endcsname#1{\mathinner
        {\mathbb{E}[{\mathcode`\|32768\let|\midvert #1}]}}
  \xdef\Ex{\protect\expandafter\noexpand\csname Ex \endcsname}
\gdef\csname Ex \endcsname#1{\mathbb{E}\left[%
     \ifx\SavedDoubleVert\relax \let\SavedDoubleVert\|\fi
     {\let\|\SetDoubleVert
     \mathcode`\|32768\let|\SetVert
     #1}\right]}
  \xdef\var{\protect\expandafter\noexpand\csname var \endcsname}
\gdef\csname var \endcsname#1{\mathinner
        {\operatorname{Var}[{\mathcode`\|32768\let|\midvert #1}]}}
  \xdef\Var{\protect\expandafter\noexpand\csname Var \endcsname}
\gdef\csname Var \endcsname#1{\operatorname{Var}\left[%
     \ifx\SavedDoubleVert\relax \let\SavedDoubleVert\|\fi
     {\let\|\SetDoubleVert
     \mathcode`\|32768\let|\SetVert
     #1}\right]}
  \xdef\cov{\protect\expandafter\noexpand\csname cov \endcsname}
\gdef\csname cov \endcsname#1{\mathinner
        {\operatorname{Cov}[{\mathcode`\|32768\let|\midvert #1}]}}
  \xdef\Cov{\protect\expandafter\noexpand\csname Cov \endcsname}
\gdef\csname Cov \endcsname#1{\operatorname{Cov}\left[%
     \ifx\SavedDoubleVert\relax \let\SavedDoubleVert\|\fi
     {\let\|\SetDoubleVert
     \mathcode`\|32768\let|\SetVert
     #1}\right]}
\DeclarePairedDelimiter\abs{\lvert}{\rvert}
\DeclarePairedDelimiterX{\norm}[1]{\lVert}{\rVert}{#1}
\DeclarePairedDelimiter\ceil{\lceil}{\rceil}
\DeclarePairedDelimiter\floor{\lfloor}{\rfloor}
\DeclarePairedDelimiter\intcc{[}{]}
\DeclarePairedDelimiter\intco{[}{)}
\newcommandx*{\LDAUOmicron}[2][1=@pkling_false]{\mathrm{O}\ifthenelse{\equal{#1}{small}}{\bigl(#2\bigr)}{\left(#2\right)}}
\newcommandx*{\LDAUomicron}[2][1=@pkling_false]{\mathrm{o}\ifthenelse{\equal{#1}{small}}{\bigl(#2\bigr)}{\left(#2\right)}}
\newcommandx*{\LDAUOmega}[2][1=@pkling_false]{\Omega\ifthenelse{\equal{#1}{small}}{\bigl(#2\bigr)}{\left(#2\right)}}
\newcommandx*{\LDAUomega}[2][1=@pkling_false]{\omega\ifthenelse{\equal{#1}{small}}{\bigl(#2\bigr)}{\left(#2\right)}}
\newcommandx*{\LDAUTheta}[2][1=@pkling_false]{\Theta\ifthenelse{\equal{#1}{small}}{\bigl(#2\bigr)}{\left(#2\right)}}
\newcommand*{\R}{\mathbb{R}}
\newcommand*{\N}{\mathbb{N}}
\newcommand{\disc}{\operatorname{disc}}
\newcommand{\Bin}{\operatorname{Bin}}
\newcommand{\balance}{\textsc{Balance}\xspace}
\newcommand{\shuffle}{\textsc{Shuffle}\xspace}
\newcommand{\vbased}{\textsc{Vertex-Based Balancer}\xspace}
\newcommand{\NRC}{\textsc{NRC}\xspace}
\newcommand{\cnt}[1]{c_{#1}}
\newcommand{\opinion}[1]{o_{#1}}
\newcommand{\majest}[1]{plu_{#1}}
\newcommand{\dom}[1]{dom_{#1}}
\newcommand{\estimate}[1]{e_{#1}}
\newcommand{\pmin}{p_{\text{min}}}
\newcommand{\tmix}{t_{\text{mix}}}
\newcommand{\mat}{\bm{M}}
\newcommand{\rw}{\bm{P}}
\newcommand{\textcite}{\cite}
\newcommand{\Textcite}{\cite}
\title{Plurality Consensus via Shuffling:\protect\\Lessons Learned from Load Balancing}
\author[1]{Petra Berenbrink}
\author[2]{Tom Friedetzky}
\author[1]{Peter Kling}
\author[1,3]{Frederik Mallmann-Trenn}
\author[2]{Chris Wastell}
\affil[1]{Simon Fraser University, Burnaby, Canada}
\affil[2]{Durham University, Durham, U.K.}
\affil[3]{École normale supérieure, Paris, France}
\date{}
\begin{document}
\maketitle
\begin{abstract}
We consider \emph{plurality consensus} in a network of $n$ nodes.
Initially, each node has one of $k$ opinions.
The nodes execute a (randomized) distributed protocol to agree on the \emph{plurality opinion} (the opinion initially supported by the most nodes).
Nodes in such networks are often quite cheap and simple, and hence one seeks protocols that are not only fast but also simple and space efficient.
Typically, protocols depend heavily on the employed communication mechanism, which ranges from sequential (only one pair of nodes communicates at any time) to fully parallel (all nodes communicate with all their neighbors at once) communication and everything in-between.

We propose a framework to design protocols for a multitude of communication mechanisms.
We introduce protocols that solve the plurality consensus problem and are with probability $1-\LDAUomicron{1}$ both time and space efficient.
Our protocols are based on an interesting relationship between plurality consensus and distributed load balancing.
This relationship allows us to design protocols that generalize the state of the art for a large range of problem parameters.
In particular, we obtain the same bounds as the recent result of \textcite{DBLP:conf/podc/AlistarhGV15} (who consider only two opinions on a clique) using a much simpler protocol that generalizes naturally to general graphs and multiple opinions.
\end{abstract}

% !TEX TS-program = pdflatexmk
% !TEX encoding = UTF-8 Unicode
% !TEX spellcheck = English (United States)
% !TEX root = DistributedExactMajority.tex

\section{Introduction}
The goal of the \emph{plurality consensus} problem is to find the so-called \emph{plurality opinion} (i.e., the opinion that is initially supported by the largest subset of nodes) in a network $G$ where, initially, each of the $n$ nodes has one of $k$ opinions.
Applications for this problem include Distributed Computing~\cite{DBLP:conf/spaa/DoerrGMSS11,DBLP:journals/tcs/Peleg02,DBLP:conf/infocom/PerronVV09}, Social Networks~\cite{DBLP:conf/innovations/MosselS10,DBLP:journals/tcs/ClementiIGNS15,DBLP:journals/aamas/MosselNT14}, as well as biological interactions~\cite{chen2013programmable,CC14}.
All these areas typically demand both very simple and space-efficient protocols.
Communication models, however, can vary from anything between simple sequential communication with a single neighbor (often used in biological settings as a simple variant of asynchronous communication~\cite{DBLP:journals/eatcs/AspnesR07}) to fully parallel communication where all nodes communicate with all their neighbors simultaneously (like broadcasting models in distributed computing).
This diversity turns out to be a major bottleneck in algorithm design, since protocols (and their analysis) depend to a large part on the employed communication mechanism.

In this paper we present two simple plurality consensus protocols called \shuffle and \balance.
Both protocols work in a very general communication model which uses discrete rounds.
The communication partners are determined by a (possibly randomized) sequence $(\mat_t)_{t\leq N}$ of \emph{communication matrices}, where we assume\footnote{%
	For simplicity and without loss of generality; our protocols run in polynomial time in all considered models.
} $N$ to be some arbitrary large polynomial in $n$.
That is, nodes $u$ and $v$ can communicate in round $t$ if and only if $\mat_t[u,v]=1$.
In that case, we call the edge $\{u,v\}$ \emph{active}.
Our results allow for a wide class of communication patterns (which can even vary over time) as long as the communication matrices have certain \enquote{smoothing} properties (cf.~Section~\ref{sec:model}).
These smoothing properties are inspired by similar smoothing properties used by \textcite{DBLP:conf/focs/SauerwaldS12} for load balancing in the dimension exchange model.
In fact, load balancing is the source of inspiration for our protocols.
Initially, each node creates a suitably chosen number of tokens labeled with its own opinion.
Our \balance protocol then simply performs discrete load balancing on these tokens, allowing each node to get an estimate on the total number of tokens for each opinion.
The \shuffle protocol keeps the number of tokens on every node fixed, but shuffles tokens between communication partners.
By keeping track of how many tokens of their own opinion (label) were exchanged in total, nodes gain an estimate on the total (global) number of such tokens.
Together with a simple broadcast routine, the nodes are able to determine the plurality opinion.

The run time of our protocols is the smallest time $t$ for which all nodes have stabilized on the plurality opinion.
That is, all nodes have determined the plurality opinion and will not change.
This time depends on the network $G$, the communication pattern $(\mat_t)_{t\leq N}$, and the initial bias towards the plurality opinion (cf.~Section~\ref{sec:model}).
For both protocols we show a strong correlation between their run time and the mixing time of certain random walks and the (related) \emph{smoothing time}, both of which are used in the analysis of recent load balancing results~\cite{DBLP:conf/focs/SauerwaldS12}.
To give some more concrete examples of our results, let $T\coloneqq\LDAUOmicron{\log n/(1-\lambda_2)}$, where $1-\lambda_2$ is the spectral gap of $G$.
If the bias is sufficiently high, then both our protocols \shuffle and \balance determine the plurality opinion in time
\begin{enumerate*}
\item $n\cdot T$ in the \emph{sequential model} (only one pair of nodes communicates per time step);
\item $d\cdot T$ in the \emph{balancing circuit model} (communication partners are chosen according to $d$ (deterministic) perfect matchings in a round-robin fashion); and
\item $T$ in the \emph{diffusion model} (all nodes communicate with all their neighbors at once).
\end{enumerate*}
To the best of our knowledge, these match the best known bounds in the corresponding models.
For an arbitrary bias, the protocols differ in their time and space requirements.
More details about our results can be found in Section~\ref{subsec:our_contribution}.

\subsection{Related Work}
The subsequent discussion focuses on distributed models for large networks, where nodes are typically assumed to be very simple, and efficiency is measured in both time and space.
Most results depend on the \emph{initial bias} $\alpha\coloneqq\frac{n_1-n_2}{n}\in\intcc{\sfrac{1}{n},1}$, where $n_1$ and $n_2$ denote the number of nodes with the most common and second most common opinions, respectively.
The special case of plurality consensus with $k=2$ is often referred to as \emph{majority voting} or \emph{binary consensus}.
Similar to~\cite{DBLP:conf/soda/BecchettiCNPS15}, we use the term \emph{plurality} (instead of majority) to highlight that, for $k>2$, the opinion supported by the largest subset of nodes might be far from an (absolute) majority.

\myparagraph{Population Protocols}
The first major line of work on majority voting considers \emph{population protocols}.
Here, nodes are modelled as finite state machines with a small state space.
Communication partners are chosen either adversarial or randomly.
See~\cite{DBLP:journals/eatcs/AspnesR07,DBLP:journals/dc/AngluinAER07} for a more detailed model description.
\Textcite{DBLP:journals/dc/AngluinAE08} propose a 3-state (i.e., constant memory) population protocol for majority voting (i.e., $k=2$) on the clique to model the mixing behavior of molecules.
We refer to their communication model as the \emph{sequential model}: each time step, an edge is chosen uniformly at random, such that only one pair of nodes communicates.
If the initial bias $\alpha$ is $\LDAUomega{\log n/\sqrt{n}}$, their protocol lets all nodes agree (w.h.p.) on the majority opinion in $\LDAUOmicron{n\cdot\log n}$ steps.
\Textcite{DBLP:conf/icalp/MertziosNRS14} show that this 3-state protocol fails on general graphs in that there are infinitely many graphs on which it returns the minority opinion or has exponential run time.
They also provide a 4-state protocol for \emph{exact} majority voting, which \emph{always} returns the majority opinion (independent of $\alpha$) in time $\LDAUOmicron{n^6}$ on arbitrary graphs and in time $\LDAUOmicron{\frac{\log n}{\alpha}\cdot n^2}$ on the clique.
This result is optimal in that no population protocol for exact majority can have fewer than four states.
A very recent result is due to \textcite{DBLP:conf/podc/AlistarhGV15}.
They give a sophisticated (if slightly complicated) protocol for $k=2$ on the clique in the sequential model.
It solves exact majority and has (w.h.p.) \emph{parallel run time}\footnote{%
	Parallel run time in the sequential model is the number of (sequential) time steps divided by $n$.
	This is a typical measure for population protocols and based on the intuition that, in expectation, each node communicates with one neighbor within $n$ time steps.}
$\LDAUOmicron[small]{\frac{\log^2 n}{s\cdot\alpha}+\log^2 n\cdot\log s}$.
Here, $s$ is the number of states and must be in the range $s=\LDAUOmicron{n}$ and $s=\LDAUOmega{\log n\cdot\log\log n}$.

\myparagraph{Pull Voting}
The second major research line on plurality consensus has its roots in gossiping and rumor spreading.
Communication in these models is often restricted to pull requests, where nodes can query other nodes' opinions and use a simple rule to update their own opinion (note that the 3-state protocol from~\cite{DBLP:journals/dc/AngluinAE08} fits into this model).
See~\cite{DBLP:journals/tcs/Peleg02} for a slightly dated but thorough survey.
In a recent result, \textcite{DBLP:conf/icalp/CooperER14} consider a voting process for two opinions on arbitrary $d$-regular graphs.
They pull the opinion of two random neighbors and, if the pulled opinions are identical, adopt it.
For random $d$-regular graphs, (w.h.p.) all nodes agree after $\LDAUOmicron{\log n}$ steps on the plurality opinion (provided that $\alpha=\LDAUOmega[small]{\sqrt{1/d+d/n}}$).
For an arbitrary $d$-regular graph $G$, they need $\alpha=\LDAUOmega{\lambda}$ (where $1-\lambda_2$ is the spectral gap of $G$).
\Textcite{DBLP:conf/spaa/BecchettiCNPST14} consider a similar update rule on the clique for $k$ opinions.
Here, each node pulls the opinion of three random neighbors and adopts the majority opinion among those three (breaking ties uniformly at random).
They need $\LDAUOmicron{\log k}$ memory bits and prove a tight run time of $\LDAUTheta{k\cdot\log n}$ for this protocol (given a sufficiently high bias $\alpha$).
In another recent paper, \textcite{DBLP:conf/soda/BecchettiCNPS15} build upon the idea of the 3-state population protocol from~\cite{DBLP:journals/dc/AngluinAE08}.
Using a slightly different time and communication model, they generalize the protocol to $k$ opinions (on the clique).
In their model, nodes act in parallel and pull the opinion of a random neighbor each round.
Given a memory of $\log k+\LDAUOmicron{1}$ bits and assuming $k=\LDAUOmicron{(n/\log n)^{1/6}}$, they agree (w.h.p.) on the plurality opinion in time $\LDAUOmicron{k\cdot\log n}$ (given a sufficienctly high bias\footnote{%
	Their bias definition differs slightly from previous work, requiring $n_1\geq(1+\varepsilon)n_2$ for a constant $\varepsilon>0$.
}).
Note that, in contrast to all these results, we require our protocols to work for \emph{any} bias, even if it is only by one node (similar to~\cite{DBLP:conf/podc/AlistarhGV15}).

\myparagraph{Further Models}
Aside from the two research lines mentioned above, there is a multitude of related but quite different models.
They differ, for example, in the consensus requirement, the time model, or the graph models.
This paragraph gives merely a small overview over such model variants.
For details, the reader is referred to the corresponding literature.

In one very common variant of the voter model~\cite{HL75,DW83,DBLP:journals/iandc/HassinP01,DBLP:journals/siamdm/CooperEOR13,L85,AF14,LN07,Mal14}, one is interested in the time it takes for the nodes to agree on \emph{some} (arbitrary) opinion.
Notable representatives of this flavor are~\cite{DBLP:conf/spaa/DoerrGMSS11,BCNPT15}.
Both papers consider a consensus variant where the consensus can be on an arbitrary opinion (instead of on the plurality).
They have the additional requirement that the agreement is robust even in the presence of adversarial corruptions.
Another variant~\cite{DBLP:conf/infocom/PerronVV09} of distributed voting considers the 3-state protocol from~\cite{DBLP:journals/dc/AngluinAE08} (for two opinions on the complete graph), but in a continuous time model.
A third variant~\cite{DBLP:journals/dam/AbdullahD15} considers majority voting on special graphs given by a degree sequence.
Other protocols such as the one presented in~\cite{DBLP:journals/siamco/DraiefV12} guarantee converegence to the majority opinion.
The authors of~\cite{DBLP:journals/siamco/DraiefV12} analyse their protocol for 2 opinions.

\myparagraph{Load Balancing}
While our problem is quite different from load balancing, our results use techniques from and show interesting connections to certain types of discrete load balancers\footnote{%
	Similar connections can be drawn to work on averaging (e.g.,~\cite{DBLP:conf/focs/KDG03,DBLP:journals/tit/BoydGPS06}).
	However, due to the integrality/memory constraints, our problem is more closely related to (discrete) load balancing as considered in~\cite{RSW98,DBLP:conf/focs/SauerwaldS12}.
}.
In discrete load balancing, each node starts with an arbitrary number of tokens.
Each time step, nodes can exchange load over active edges.
The goal is typically to minimize the \emph{discrepancy} (the maximum load difference between any pair of nodes), and $K$ denotes the initial discrepancy.
The following results for $d$-regular graphs hold due to~\cite{DBLP:conf/focs/SauerwaldS12,RSW98}:
The discrepancy can be reduced to
\begin{enumerate*}
\item a constant in $\LDAUOmicron{n\cdot\log(Kn)/(1-\lambda_2)}$ time steps in the sequential model,
\item a constant in $\LDAUOmicron{d\cdot\log(Kn)/(1-\lambda_2)}$ time steps in the balancing circuits model for $d$ perfect matchings, and
\item $\LDAUOmicron{\sqrt{d\cdot\log n}/(1-\lambda_2)}$ in $\LDAUOmicron{\log(Kn)/(1-\lambda_2)}$ time steps in the diffusion model.
\end{enumerate*}

\subsection{Our Contribution}\label{subsec:our_contribution}
We introduce two protocols for plurality consensus, called \shuffle and \balance.
Both solve plurality consensus in a discrete time model under a diverse set of (randomized or adversarial) communication patterns for an arbitrary non-zero bias.
In particular, our very simple \balance protocol generalizes the work of \textcite{DBLP:conf/podc/AlistarhGV15} threefold:
\begin{enumerate*}
\item to arbitrary graphs,
\item to an arbitrary number $k$ of opinions, and
\item to more general (and truly parallel) communication models.
\end{enumerate*}
This generalization to parallel communication models requires a much more careful analysis, since we must deal with additional dependencies.
We continue with a detailed description of our results.
For both protocols, we give both run time and memory (measured in bits) bounds.

\myparagraph{\shuffle}
Our main result is the \shuffle protocol.
In the first time step each node generates $\gamma$ tokens labeled with its initial opinion.
During round $t$, any pair of nodes connected by an active edge (as specified by the communication pattern $(\mat_t)_{t\leq N}$) exchanges tokens.
We show that \shuffle solves plurality consensus and allows for a trade-off between run time and memory.
More exactly, let the number of tokens $\gamma=\LDAUOmicron{\log n/(\alpha^2\cdot T)}$, where $T$ is a parameter to control the trade-off between memory and run time.
Moreover, let $\tmix$ be such that any time interval $\intcc{t,t+\tmix}$ is \emph{$\varepsilon$-smoothing}\footnote{%
	Intuitively, this means that the communication pattern has good load balancing properties during any time window of length $\tmix$.
	This coincides with is the worst-case mixing time of a lazy random walk on active edges.
} (cf.~Section~\ref{sec:model}).
Then, \shuffle ensures that all nodes agree on the plurality opinion in $\LDAUOmicron{T\cdot\tmix}$ rounds (w.h.p.), using $\LDAUOmicron{\log n/(\alpha^2T)\cdot\log k+\log(T\cdot \tmix)}$ memory bits per node.
This implies, for example, that plurality consensus on expanders in the sequential model is achieved in $\LDAUOmicron{T\cdot n\log n}$ time steps and that nodes require only $\LDAUOmicron{\log n\cdot\log k/T+\log(Tn)}$ memory bits (assuming a constant initial bias).
For arbitrary graphs and many natural communication patterns (e.g., communicating with all neighbors in every round or communicating via random matchings), the time for plurality consensus is closely related to the spectral gap of the underlying communication network (cf.~Corollary~\ref{thm:main_corollary}).
To the best of our knowledge, this is the first plurality protocol (for more than two opinions) that can handle an arbitrary initial bias.

While our protocol is relatively simple, the analysis is much more involved.
The idea is to observe a single token only, and to show that, after $\tmix$ time steps, the token is (roughly) on any node with the same probability.
The main ingredients are Lemmas~\ref{lem:diff_majorized_by_rw} and~\ref{lem:neg_regression}, a generalization of a result by \textcite{DBLP:conf/focs/SauerwaldS12}.
These lemmas show that the joint distribution of token locations is negatively correlated, allowing us to derive a suitable Chernoff bound.
This is then used to show that, after $\tmix$ time steps, each node has a pretty good idea of the total numbers of tokens that are labeled with its opinion.
Using a broadcast-like protocol (nodes always forward their guess of the plurality), all nodes can determine the plurality opinion.
We believe that the non-trivial generalization of the negative correlation result is interesting in its own right.

\myparagraph{\balance}
The previous protocol (\shuffle) allows for a nice trade-off between run time and memory.
If the number of opinions is comparatively small, our much simpler \balance protocol gives better results.
In \balance, each node $u$ maintains a $k$-dimensional load vector.
If $j$ denotes $u$'s initial opinion, the $j$-th dimension of this load vector is initialized with $\gamma\in\N$ (a sufficiently large value) and any other dimension is initialized with zero.
In each time step, all nodes perform a simple, discrete load balancing on each dimension of these load vectors.
Our results imply, for example, that plurality consensus on expanders in the sequential model is achieved in only $\LDAUOmicron{n\cdot\log n}$ time steps with $\LDAUOmicron{k}$ memory bits per node (assuming a constant initial bias).
In the setting considered by~\textcite{DBLP:conf/podc/AlistarhGV15} (but for arbitrary $k$ instead of $k=2$), \balance achieves plurality consensus in time $\LDAUOmicron{n\cdot\log n}$ and uses $\LDAUOmicron{\log(1/\alpha)\cdot k}$ bits per node (Corollary~\ref{mastercor}).
This not only improves\footnote{%
	Note that the bounds in~\cite{DBLP:conf/podc/AlistarhGV15} are stated in parallel time, which is simply the normal run time divided by $n$.
} by a logarithmic factor over~\cite{DBLP:conf/podc/AlistarhGV15} (who consider $k=2$), but generalizes the results to $k>2$ via a much simpler protocol (although both protocols are similar in spirit).

% !TEX TS-program = pdflatexmk
% !TEX encoding = UTF-8 Unicode
% !TEX spellcheck = English (United States)
% !TEX root = DistributedExactMajority.tex

\section{Model \& General Definitions}\label{sec:model}
We consider an undirected graph $G=(V,E)$ of $n\in\N$ nodes and let $1-\lambda_2$ denote the eigenvalue (or spectral) gap of $G$.
Note that $1-\lambda_2$ is constant for expanders and the clique.
Each node $u$ is assigned an \emph{opinion} $o_u\in\set{1,2,\dots,k}$.
For $i\in\set{1,2,\dots,k}$, we use $n_i\in\N$ to denote the number of nodes which have initially opinion $i$.
Without loss of generality (w.l.o.g), we assume $n_1>n_2\geq\dots\geq n_k$, such that $1$ is the opinion that is initially supported by the largest subset of nodes.
We also say that $1$ is the \emph{plurality opinion}.
The value $\alpha\coloneqq\frac{n_1-n_2}{n}\in\intcc{\sfrac{1}{n},1}$ denotes the \emph{initial bias} towards the plurality opinion.
In the \emph{plurality consensus problem}, the goal is to design simple, distributed protocols that let all nodes agree on the plurality opinion.
Time is measured in discrete rounds, such that the (randomized) run time of our protocols is the  number of rounds it takes until all nodes are aware of the plurality opinion.
As a second quality measure, we consider the total number of memory bits per node that are required by our protocols\footnote{%
	Literature more closely related to biological settings often uses the number of states $s\in\N$ a node can have to measure the space requirement of protocols (e.g.,~\cite{DBLP:conf/podc/AlistarhGV15}).
	Such protocols require $\ceil{\log s}$ bits per node.
}.
All our statements and proofs assume $n$ to be sufficiently large.

\myparagraph{Communication Model}
In any given round, two nodes $u$ and $v$ can communicate if and only if the edge between $u$ and $v$ is \emph{active}.
We use $\mat_t$ to denote the symmetric \emph{communication matrix} at time $t$, where $\mat_t[u,v]=\mat_t[v,u] =1 $ if $\{u,v\}$ is active and $\mat_t[u,v]=\mat_t[v,u]=0$ otherwise.
We assume (w.l.o.g) $\mat_t[u,u]=1$ (allowing nodes to \enquote{communicate} with themselves).
Typically, the sequence $\mat=(\mat_t)_{t\in\N}$ of communication matrices (the \emph{communication pattern}) is either randomized or adversarial, and our statements merely require that $\mat$ satisfies certain smoothing properties (see below).
For the ease of presentation, we restrict ourselves to polynomial number of time steps and consider only communication patterns $\mat=(\mat_t)_{t\leq N}$ where $N=N(n)$ is an arbitrarily large polynomial.
Let us briefly mention some natural and common communication models covered by such patterns:
\begin{itemize}
\item \emph{Diffusion Model:} All edges of the graph are permanently activated.
\item \emph{Random matching model:} In every round $t$, the active edges are given by a random matching.
	We require that random matchings from different rounds are mutually independent\footnote{%
		Note that there are several simple, distributed protocols to obtain such matchings~\cite{DBLP:journals/jcss/GhoshM96,DBLP:journals/tit/BoydGPS06}.
	}.
	While we do not restrict the exact way how the matching is chosen, results for the random matching model dependent on the parameter $\pmin\coloneqq\min_{t\in\N,\{u,v\}\in E}\Pr{\mat_t[u,v]=1}$.
\item \emph{Balancing Circuit Model:} There are $d$ perfect matchings $\mat_0,\mat_1,\dots,\mat_{d-1}$ given.
	They are used in a round-robin fashion, such that for $t\geq d$ we have $\mat_t=\mat_{t\bmod d}$.
\item \emph{Sequential Model:} In every round $t$, one edge $\{u,v\}\in E$ is chosen uniformly at random and activated (i.e., $\mat_t$ has exactly 4 non-zero entries).
\end{itemize}

\myparagraph{Notation}
We use $\norm{\bm{x}}_{\ell}$ to denote the $\ell$-norm of vector $\bm{x}$, where the $\infty$-norm is the vector's maximum absolute entry.
In general, bold font indicates vectors and matrices, and we use $x(i)$ to refer to the $i$-th component of vector $\bm{x}$.
The \emph{discrepancy} of a vector $\bm{x}$ is defined as $\disc(\bm{x})\coloneqq\max_ix(i)-\min_ix(i)$.
For a natural number $i\in\N$, we define $\intcc{i}\coloneqq\set{1,2,\dots,i}$ as the set of the first $i$ integers.
We use $\log x$ to denote the binary logarithm of $x\in\R_{>0}$.
We write $a \mid b$ if $a$ divides $b$.
For any node $u\in V$, we use $d(u)$ to denote $u$'s degree in $G$ and $d_t(u)\coloneqq\sum_v\mat_t[u,v]$ to denote its \emph{active degree} at time $t$ (i.e., its degree when restricted to active edges).
Similarly, $N(u)$ and $N_t(u)$ are used to refer $u$'s (active) neighborhood.
Moreover, let $\Delta\coloneqq\max_{t,u}d_t(u)$ be the maximum active degree of any node at any time in the given communication pattern.
We assume knowledge of $\Delta$, which merely means we assume that the nodes are aware of the communication model.

\myparagraph{Smoothing Property}\label{subsec:random_walk}
The run time of our protocols is closely related to the run time (\enquote{smoothing time}) of diffusion load balancing algorithms, which in turn is a function of the mixing time of a random walk on $G$.
More exactly, we consider a random walk on $G$ that is restricted to the active edges in each time step.
As indicated in Section~\ref{subsec:our_contribution}, this random walk should converge towards the uniform distribution over the nodes of $G$.
This leads to the following definition of the random walk's transition matrices $\rw_t$ based on the communication matrices $\mat_t$:
\begin{equation}
\rw_t[u,v]\coloneqq\begin{cases}
\frac{1}{2\Delta}        & \text{ if $\mat_t[u,v]=1$ and $u\not= v$,}\\
1-\frac{d_t(u)}{2\Delta} & \text{ if $\mat_t[u,v]=1$ and $u= v$,}    \\
0                        & \text{ if $\mat_t[u,v]=0$.}
\end{cases}
\end{equation}
Obviously, $\rw_t$ is doubly stochastic for all $t\in\N$.
Moreover, note that the random walk is trivial in any matching-based model, while we get $\rw_t[u,v]=\frac{1}{2d}$ for every edge $\{u,v\}\in E$ in the diffusion model on a $d$-regular graph.
We are now ready to define the required smoothing property.
\begin{definition}[$\varepsilon$-smoothing]
Consider a fixed sequence $(\mat_t)_{t\leq N}$ of communication matrices and a time interval $\intcc{t_1,t_2}$.
We say $\intcc{t_1,t_2}$ is \emph{$\varepsilon$-smoothing} (under $(\mat_t)_{t\leq N}$) if for any non-negative vector $\bm{x}$ with $\norm{\bm{x}}_{\infty}=1$ it holds that $\disc(\bm{x}\cdot\prod_{t=t_1}^{t_2}\rw_t)\leq\varepsilon$.
Moreover, we define the {\em mixing time} $\tmix(\varepsilon)$ as the smallest number of steps such that any time window of length $\tmix(\varepsilon)$ is $\varepsilon$-smoothing.
That is,
\begin{equation}
\tmix(\varepsilon)\coloneqq\min\set{t'|\forall t\in\N\colon\intcc{t,t+t'} \text{ is $\varepsilon$-smoothing}}
.
\end{equation}
\end{definition}
Note that the mixing time can be seen as the worst-case time required by a random walk to get \enquote{close} to the uniform distribution.
If the parameter $\varepsilon$ is not explicitly stated, we consider $\tmix\coloneqq\tmix(n^{-5})$.
To simplify the description of our protocols we assume that all nodes know $\tmix$.
This is without loss of generality, as we can \enquote{guess} the mixing time by a standard doubling-approach.
Note that $\tmix$ depends on the sequence $(\mat_t)_{t\leq N}$ of communication matrices.

% !TEX TS-program = pdflatexmk
% !TEX encoding = UTF-8 Unicode
% !TEX spellcheck = English (United States)
% !TEX root = DistributedExactMajority.tex

% For the memory bound we need the following:
% - \gamma log(k) to store the tokens
% - log( \gamma T ) to store $c_u$, $e_u$
% - Need to store the received broadcast
% - This needs log(k) for $dom_v$ and  log( \gamma T ) for $e_v$
% - We need the same again for storing a temporary guess $dom_u$ and $e_u$
% - We need log(k) to store the plurality guess $plu_u$
% - We need log(k) to store the opinion we count $o_u$
% - and finally we need log (t_mix T) to store the time step counter
% - What about a \Delta \log(n) for the neighbours??

%\newcommand \shufflemem{$\LDAUOmicron{\log(n)/(\alpha^2\cdot T)\cdot\log(k)+\log(T\cdot\tmix)}$\xspace}
\newcommand\shufflemem{$\left(12\cdot\frac{\log(n)}{\alpha^2\cdot T}+4\right)\cdot\log(k)+4\log\left(\frac{12\cdot\log(n)}{\alpha^2}\right)+\log(T\cdot\tmix)$\xspace}

\section{Protocol \shuffle}\label{sec:shuffle}
Our main result is the following theorem, stating the correctness as well as the time- and space-efficiency of \shuffle.
A formal description of \shuffle can be found in Section~\ref{subsec:protocol}, followed by its analysis in Section~\ref{standardvotermodel}.
\begin{theorem}\label{thm:main}
Let $\alpha = \frac{n_1-n_2}{n}\in\intcc{\sfrac{1}{n},1}$ denotes the initial bias.
Consider a fixed communication pattern $(\mat_t)_{t\leq N}$ and an arbitrary parameter $T\in\N$.
Protocol \shuffle ensures that all nodes know the plurality opinion after $\LDAUOmicron{T\cdot\tmix}$ rounds (w.h.p.)\footnote{We say an event happens with high probability (w.h.p.)if its probability is at least $1-1/n^c$ for $c\in\mathbb{N}$.} and requires \shufflemem memory bits per node.
\end{theorem}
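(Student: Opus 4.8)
The plan is to track a single token through the shuffling process and argue that, after $\tmix$ rounds, its location is nearly uniform over $V$, so that each node's count of tokens bearing its own label is a sum of (nearly) independent indicator variables with mean proportional to the global count of that opinion. First I would fix a token initially placed at node $w$ and define the random variable $X_t \in V$ giving its position after $t$ rounds of \shuffle. The key claim is that the marginal distribution of $X_{\tmix}$ is close (in $\infty$-norm, up to an additive $n^{-5}$-type error) to the stationary uniform distribution: this should follow by coupling the token's trajectory with the lazy random walk whose transition matrices are the $\rw_t$ defined in Section~\ref{sec:model}, since a uniformly random token on an active edge moves to each endpoint with probability matching $\rw_t$. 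The $\varepsilon$-smoothing definition with $\varepsilon = n^{-5}$ then gives $\disc(\bm e_w \prod_{t=1}^{\tmix}\rw_t) \le n^{-5}$, which is exactly the near-uniformity we need for a single token.

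Next I would handle the fact that a node sees $\gamma$ tokens per source node, i.e.\ the quantities of interest are sums over many tokens, and different tokens' trajectories are \emph{not} independent (two tokens on the same active edge are shuffled jointly). Here I would invoke Lemmas~\ref{lem:diff_majorized_by_rw} and~\ref{lem:neg_regression}: the first should let me dominate the deviation of the per-node load vector by that of the idealized random-walk load vector, and the second (the generalization of \textcite{DBLP:conf/focs/SauerwaldS12}'s negative correlation result) should establish that the joint distribution of the token-location indicators satisfies negative regression, so that a Chernoff-type bound applies to their sum despite the dependencies. Applying this concentration: for a node $u$ with opinion $i$, the number of opinion-$i$ tokens it holds after $\tmix$ rounds is, w.h.p., within a $(1\pm\delta)$ factor of $\gamma n_i / n$ for a suitable $\delta$. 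With $\gamma = \Theta\!\left(\log n/(\alpha^2 T)\right)$ tokens and the whole process repeated over $T$ independent phases of length $\tmix$ (this is where the parameter $T$ enters), the aggregate estimate has relative error small enough — roughly $o(\alpha)$ — that every node can distinguish $n_1/n$ from $n_2/n$, since these differ by exactly $\alpha$. The failure probability per node is $n^{-c}$; a union bound over the $n$ nodes and $O(T)$ phases (with $T$ polynomial, absorbed into the constant $c$) keeps everything w.h.p.

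Then I would add the broadcast layer: once each node has a candidate for the plurality opinion (the label maximizing its local estimate), nodes repeatedly forward the candidate opinion together with its associated estimated global count, keeping the one with the largest count. Since the communication pattern is $\varepsilon$-smoothing, any piece of information reaches all nodes within $O(\tmix)$ additional rounds (the same mixing argument, now for information spread rather than token mass), so after $O(\tmix)$ more rounds all nodes agree on the single opinion with the globally largest estimate — which w.h.p.\ is opinion $1$. The total round count is $O(T\cdot\tmix)$ as claimed. For the memory bound: each node stores its $\gamma$-ish token counters, one per opinion that ever appears on it, contributing $O(\gamma \log k)$ bits, i.e.\ the $\bigl(12\log(n)/(\alpha^2 T)+4\bigr)\log k$ term; the $4\log(12\log(n)/\alpha^2)$ term accounts for storing the accumulated counts/estimates; and $\log(T\cdot\tmix)$ bits suffice for a round counter. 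I would verify that the constants in $\gamma$ are chosen so that the Chernoff bound in the previous step yields exactly failure probability $n^{-c}$, which fixes the numeric constant $12$.

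The main obstacle I anticipate is the dependency structure among token trajectories in the truly parallel models: unlike the sequential or matching settings, in the diffusion model a node may shuffle many tokens across many active edges simultaneously, and establishing that the resulting joint law still obeys negative regression (Lemma~\ref{lem:neg_regression}) — so that the single-token mixing estimate lifts to a concentration bound on the sums — is the technically delicate heart of the argument. Everything downstream (choosing $\gamma$, the union bound, the broadcast) is comparatively routine once that concentration is in hand.
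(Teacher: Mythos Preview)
Your proposal is correct and follows the paper's approach almost exactly: single-token near-uniformity from the $\varepsilon$-smoothing definition, negative regression (Lemma~\ref{lem:neg_regression}) and majorization by independent random walks (Lemma~\ref{lem:diff_majorized_by_rw}) to obtain a Chernoff bound (Lemma~\ref{lem:chernoff_for_shuffle}), concentration of the accumulated counters (Lemma~\ref{onetwothree}), a broadcast phase of length~$\tmix$ to propagate the maximal counter, and then the memory accounting.

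One technical point you gloss over: the majorization in Lemma~\ref{lem:diff_majorized_by_rw} is one-sided, so the resulting Chernoff bound (Lemma~\ref{lem:chernoff_for_shuffle}) gives only an \emph{upper} tail, not the two-sided ``within $(1\pm\delta)$'' concentration you invoke. The paper does not prove a lower-tail bound for the opinion-$1$ counter directly; instead it uses the invariant that every node holds exactly $\gamma$ tokens at every checkpoint, writes $\cnt{v}=T\gamma-Y$ where $Y$ counts tokens on $v$ \emph{not} labeled~$1$, and applies the upper-tail Chernoff to~$Y$. This complement trick is essential and is where the ``fixed number of tokens'' feature of \shuffle is actually used in the analysis.
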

The parameter $T$ in the theorem statement serves as a lever to trade run time for memory.
Since $\tmix$ depends on the graph and communication pattern, Theorem~\ref{thm:main} might look a bit unwieldy.
The following corollary gives a few concrete examples for common communication patterns on general graphs.
\begin{corollary}\label{thm:main_corollary}
Let $G$ be an arbitrary $d$-regular graph.
\shuffle ensures that all nodes agree on the plurality opinion (w.h.p.) using \shufflemem bits of memory in time
\begin{enumerate}
\item\label{thm:main_corollary:a} $\LDAUOmicron[small]{T\cdot\frac{\log(n)}{1-\lambda_2}}$  in the diffusion model,
\item\label{thm:main_corollary:b} $\LDAUOmicron[small]{\frac{T}{d\cdot\pmin}\cdot\frac{\log(n)}{1-\lambda_2}}$ in the random matching model,
\item\label{thm:main_corollary:c} $\LDAUOmicron[small]{T\cdot d\cdot\frac{\log(n)}{1-\lambda_2}}$  in the balancing circuit model, and
\item\label{thm:main_corollary:d} $\LDAUOmicron[small]{T\cdot n\cdot\frac{\log(n)}{1-\lambda_2}}$ in the sequential model.
\end{enumerate}
\end{corollary}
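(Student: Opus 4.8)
The plan is to derive Corollary~\ref{thm:main_corollary} from Theorem~\ref{thm:main} by bounding the mixing time $\tmix=\tmix(n^{-5})$ in each of the four models; the memory bound in the corollary is verbatim the one from Theorem~\ref{thm:main} and is model-independent, so nothing needs to be done there. The recurring tool is the elementary estimate that, for a doubly stochastic matrix $\bm M$ and any non-negative $\bm x$ with $\norm{\bm x}_\infty=1$, one has $\disc(\bm x\bm M)\le 2\norm{\bm x-\bar x\bm{1}}_2\cdot s(\bm M)\le 2\sqrt n\cdot s(\bm M)$, where $\bar x=\norm{\bm x}_1/n$ and $s(\bm M)\coloneqq\norm{\bm M|_{\bm{1}^\perp}}_{2\to2}$; moreover $s$ is submultiplicative along products of doubly stochastic matrices. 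Hence $\intcc{t_1,t_2}$ is $n^{-5}$-smoothing once $s\bigl(\prod_{t=t_1}^{t_2}\rw_t\bigr)\le n^{-6}$, so in each model it suffices to exhibit a window length $t'$ for which this holds (in the randomized models: simultaneously for all start times $t_1\le N$, with high probability over the communication pattern). I expect the balancing-circuit case to be the only real obstacle.

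\emph{Diffusion.} Every $\rw_t$ is the fixed lazy walk $\rw=\tfrac12\bigl(\bm I+\tfrac1d\bm A_G\bigr)$, whose eigenvalues on $\bm{1}^\perp$ all lie in $[0,\,1-\tfrac{1-\lambda_2}{2}]$ (the laziness makes this hold even for bipartite $G$). Thus $s(\rw^{t'})\le\bigl(1-\tfrac{1-\lambda_2}{2}\bigr)^{t'}\le n^{-6}$ for $t'=\LDAUTheta{\log n/(1-\lambda_2)}$, giving $\tmix=\LDAUOmicron{\log n/(1-\lambda_2)}$ and hence part~\ref{thm:main_corollary:a}.

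\emph{Random matching and sequential.} In both models $\rw_t$ is (a realization of) the orthogonal projection that replaces the two values on each active edge by their average; equivalently $\rw_t=\bm I-\tfrac12\sum_{e\in M_t}\bm L_e$, where $M_t$ is the active matching (the random matching, resp.\ a single uniformly random edge) and $\bm L_e$ is the edge Laplacian of $e$. Since $\rw_t$ and $\bm I-\rw_t$ are symmetric idempotents, a short computation gives, for $\bm y\perp\bm{1}$, $\mathbb E\bigl[\norm{\bm y\rw_t}_2^2\bigm|\bm y\bigr]=\norm{\bm y}_2^2-\tfrac12\,\bm y^\top\Bigl(\sum_e p_e\bm L_e\Bigr)\bm y\le\bigl(1-\tfrac\delta2\,d(1-\lambda_2)\bigr)\norm{\bm y}_2^2$, where $p_e$ is the probability that $e$ is active in a round, $\sum_e p_e\bm L_e\succeq\delta\bm L_G$ with $\delta=\pmin$ (random matching) resp.\ $\delta=2/(dn)$ (sequential), and $\bm L_G$ has smallest non-zero eigenvalue $d(1-\lambda_2)$. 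So the squared $\ell_2$-distance to the mean contracts in expectation by $1-\beta$ per round, with $\beta=\LDAUTheta{d\,\pmin(1-\lambda_2)}$ resp.\ $\beta=\LDAUTheta{(1-\lambda_2)/n}$. Iterating over a window, bounding the expectation of $s(\cdot)^2$ by a Frobenius-norm/trace argument (run the potential bound on an orthonormal basis of $\bm{1}^\perp$ and sum), and applying Markov's inequality shows that for $t'=\LDAUTheta{\log n/\beta}$ a fixed window is $n^{-5}$-smoothing with probability $1-n^{-c}$ for any constant $c$; a union bound over the $\mathrm{poly}(n)$ start times gives $\tmix=\LDAUOmicron{\log n/\beta}$ w.h.p. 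Intersecting this event with the (fixed-pattern) w.h.p.\ guarantee of Theorem~\ref{thm:main} yields parts~\ref{thm:main_corollary:b} and~\ref{thm:main_corollary:d}.

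\emph{Balancing circuit.} This is the delicate case. Now the $\rw_t$ are deterministic and $d$-periodic, so one cycle of \shuffle applies the fixed doubly stochastic \emph{period matrix} $\bm R\coloneqq\rw_0\rw_1\cdots\rw_{d-1}$, a product of $d$ averaging projections whose supports together cover $G$. By submultiplicativity of $s$ it is enough to show $s(\bm R)\le 1-\LDAUOmega{1-\lambda_2}$ and then use $r=\LDAUOmicron{\log n/(1-\lambda_2)}$ cycles, i.e.\ $\LDAUOmicron{d\log n/(1-\lambda_2)}$ rounds. To bound $s(\bm R)$, for $\bm z\perp\bm{1}$ one telescopes $\norm{\bm z}_2^2-\norm{\bm z\bm R}_2^2=\sum_{i=0}^{d-1}\norm{\bm z^{(i)}(\bm I-\rw_i)}_2^2$ with $\bm z^{(i)}\coloneqq\bm z\,\rw_0\cdots\rw_{i-1}$, then bounds the drift $\norm{\bm z^{(i)}-\bm z}_2$ in terms of the decrease accumulated so far, so that the telescoped sum can be compared to $\bm z^\top\bm L_G\bm z\ge d(1-\lambda_2)\norm{\bm z}_2^2$. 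This ``local divergence'' comparison is exactly where the difficulty lies: the $d$ averaging steps are neither independent nor commuting, so the clean expected-one-step-contraction shortcut of the previous case is unavailable and one has to invoke the balancing-circuit analysis of~\cite{RSW98,DBLP:conf/focs/SauerwaldS12}. Granting it, $\tmix=\LDAUOmicron{d\log n/(1-\lambda_2)}$, and Theorem~\ref{thm:main} gives part~\ref{thm:main_corollary:c}.
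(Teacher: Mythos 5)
Your proposal is correct and follows the route the paper intends: the corollary is stated without proof and is meant to follow from Theorem~\ref{thm:main} by instantiating $\tmix$ with the known smoothing/mixing-time bounds for each communication model from the load-balancing literature~\cite{RSW98,DBLP:conf/focs/SauerwaldS12}, exactly as the analogous Corollary~\ref{mastercor} does for \balance. Your self-contained spectral derivations of $\tmix$ for the diffusion, random-matching and sequential models (and your correct identification of the balancing-circuit round matrix as the one case genuinely requiring the cited analysis) go beyond what the paper writes down and are sound.
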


\subsection{Protocol Description}\label{subsec:protocol}
We continue to explain the \shuffle protocol given in Listing~\ref{shufflesimple}.
Our protocol consists of three parts that are executed in each time step:
\begin{enumerate*}
\item the \emph{shuffle} part,
\item the \emph{broadcast} part, and
\item the \emph{update} part.
\end{enumerate*}
Every node $u$ is initialized with $\gamma\in\N$ tokens labeled with $u$'s opinion $o_u$.
Our protocol sends $2 \Delta$ tokens chosen uniformly at random (without replacement) over each edge $\{u,v\} \in E$.
Here, $\gamma \geq 2 \Delta^2$ is a parameter depending on $T$ and $\alpha$ (to be fixed during the analysis).
\shuffle maintains the invariant that, at any time, all nodes have exactly $\gamma$ tokens.
In addition to storing the tokens, each node maintains a set of auxiliary variables.
The variable $\cnt{u}$ is increased during the update part and counts tokens labeled $\opinion{u}$.
The variable pair $(\dom{u},\estimate{u})$ is a temporary guess of the plurality opinion and its frequency.
During the broadcast part, nodes broadcast these pairs, replacing their own pair whenever they observe a pair with higher frequency.
Finally, the variable $\majest{u}$ represents the opinion currently believed to be the plurality opinion.
The shuffle and broadcast parts are executed in each time step, while the update part is executed only every
$\tmix$ time steps\footnote{%
	It is not essential for the protocol to know the mixing time.
	Using standard techniques, the protocol can guess the mixing time and grow the guess exponentially.
}.

Waiting $\tmix$ time steps for each update gives the broadcast enough time to inform all nodes and ensures that the tokens of each opinion are well distributed.
The latter implies that, if we consider a node $u$ with opinion $\opinion{u}=i$ at time $T\cdot\tmix$, the value $\cnt{u}$ is a good estimate of $T\cdot\gamma n_i/n$ (which is maximized for the plurality opinion).
When we reset the broadcast (Line~\ref{shufflesimple:line:reset}), the subsequent $\tmix$ broadcast steps ensure that all nodes get to know the pair $(\opinion{u},\cnt{u})$ for which $\cnt{u}$ is maximal.
Thus, if we can ensure that $\cnt{u}$ is a good enough approximation of $T\cdot\gamma n_i/n$, all nodes get to know the plurality.

\begin{lstlisting}[float,label={shufflesimple},caption={%
	Protocol \shuffle as executed by node $u$ at time $t$.
	At time zero, each node $u$ creates $\gamma$ tokens labeled $\opinion{u}$ and sets $\cnt{u}\coloneqq0$ and $(\dom{u},\estimate{u})\coloneqq(\opinion{u},\cnt{u})$.
}]
for $\{u,v\}\in E$ with $\mat_t[u,v]=1$:                     (*@\textbf{\{shuffle part\}}@*)
  send $2\Delta$ tokens chosen u.a.r. (without replacement) to $v$

for $\{u,v\}\in E$ with $\mat_t[u,v]=1$:                     (*@\textbf{\{broadcast part\}}@*)(*@\label{shufflesimple:line:broadcast_start}@*)
  send    $(\dom{u},\estimate{u})$
  receive $(\dom{v},\estimate{v})$
$v\coloneqq w$ with  $\estimate{w} \geq \estimate{w'}\quad\forall w,w'\in N_t(u)\cup\set{u}$
$(\dom{u},\estimate{u})\coloneqq(\dom{v},\estimate{v})$(*@\label{shufflesimple:line:broadcast_stop}@*)

if $t\equiv0\pmod\tmix$:                                    (*@\textbf{\{update part\}}@*)
  increase $c_u$ by the number of tokens labeled $\opinion{u}$ held by $u$
  $\majest{u}\coloneqq\dom{u}$            {plurality guess: last broadcast's dom. op.}
  $(\dom{u},\estimate{u})\coloneqq(\opinion{u},\cnt{u})$   {reset broadcast}(*@\label{shufflesimple:line:reset}@*)
\end{lstlisting}

\subsection{Analysis of \shuffle}\label{standardvotermodel}
Fix a communication pattern $(\mat_t)_{t\leq N}$ and an arbitrary parameter $T\in\N$.
Remember that $\tmix=\tmix(n^{-5})$ denotes the smallest number such that any time window of length $\tmix$ is $n^{-5}$-smoothing under $(\mat_t)_{t\leq N}$.
We set the number of tokens stored in each node to $\gamma\coloneqq\ceil{c\cdot\frac{\log n}{\alpha^2T}}$, where $c$ is a suitable constant.
The analysis of \shuffle is largely based on Lemma~\ref{onetwothree}, which states that, after $\LDAUOmicron{T\cdot\tmix}$ time steps, the counter values $\cnt{u}$ can be used to reliably separate the plurality opinion from any other opinion.
The main technical difficulty is the huge dependency between the tokens' movements, rendering standard Chernoff-bounds inapplicable.

Instead, we show that certain random variables satisfy the negative regression condition (Lemma~\ref{lem:neg_regression}), which allows us to majorize the token distribution by a random walk (Lemma~\ref{lem:diff_majorized_by_rw}) and to derive the following Chernoff bound.

\begin{lemma}\label{lem:chernoff_for_shuffle}
Consider any subset $B$ of tokens, a node $u\in V$, and an integer $T$.
Let $X\coloneqq\sum_{t\leq T}\sum_{j \in B}X_{j,t}$, where $X_{j,t}$ is $1$ if token $j$ is on node $u$ at time $t\cdot\tmix$.
With $\mu\coloneqq(1/n+1/n^5)\cdot\abs{B}\cdot T$, we have
\begin{equation}\label{huhuhu}
\Pr{X\geq(1+\delta)\cdot\mu}\leq e^{\delta^2\mu/3}
.
\end{equation}
\end{lemma}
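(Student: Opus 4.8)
The plan is to establish the tail bound by the standard exponential-moment (Chernoff) method. The only non-routine ingredient is that the indicators $X_{j,t}$ are highly dependent: tokens are forwarded in correlated, without-replacement batches, and a token's location at time $t\cdot\tmix$ depends on the entire configuration at time $(t-1)\cdot\tmix$. Consequently the usual independence-based factorization of the moment generating function is unavailable, and I will instead (i) bound $\mathbb{E}[X]$ from above by $\mu$ using the random-walk majorization together with the smoothing property, and (ii) use the negative regression / negative correlation of the token-location indicators to recover the bound $\mathbb{E}[e^{\lambda X}]\le\prod_{j,t}\mathbb{E}[e^{\lambda X_{j,t}}]$.

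For step (i), fix $j\in B$ and an index $t$, and condition on the configuration at time $(t-1)\cdot\tmix$ (for $t=1$ this is the deterministic initial configuration). By Lemma~\ref{lem:diff_majorized_by_rw}, the conditional distribution of token $j$'s location at time $t\cdot\tmix$ is majorized by the distribution $\bm{\pi}$ obtained by applying the product of the transition matrices $\rw_s$ over the time window of length $\tmix$ ending at $t\cdot\tmix$ to the point mass at token $j$'s current node, so that the conditional probability of $X_{j,t}=1$ is at most $\max_v\pi(v)$. A point mass is a non-negative vector of $\infty$-norm $1$, so since $\tmix=\tmix(n^{-5})$ the vector $\bm{\pi}$ has discrepancy at most $n^{-5}$; being a probability vector, it satisfies $\max_v\pi(v)\le 1/n+n^{-5}$. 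Deconditioning gives $\Pr{X_{j,t}=1}\le 1/n+1/n^5$ for every $j$ and $t$, and linearity of expectation yields $\mathbb{E}[X]\le(1/n+1/n^5)\cdot\abs{B}\cdot T=\mu$ (we only need $\mu$ to be an upper bound on the mean; the Chernoff step is monotone in it).

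For step (ii), Lemmas~\ref{lem:neg_regression} and~\ref{lem:diff_majorized_by_rw} together show that the family $\{X_{j,t}\}_{j\in B,\,t\le T}$ satisfies the negative regression condition. A standard consequence is that for every $\lambda>0$ the expectation of the product of the non-negative, non-decreasing functions $e^{\lambda X_{j,t}}$ is at most the product of the individual expectations. Writing $p_{j,t}\coloneqq\Pr{X_{j,t}=1}$ and using $1+(e^{\lambda}-1)p\le e^{(e^{\lambda}-1)p}$,
\[
\mathbb{E}\left[e^{\lambda X}\right]\;\le\;\prod_{j\in B}\prod_{t\le T}\mathbb{E}\left[e^{\lambda X_{j,t}}\right]\;=\;\prod_{j\in B}\prod_{t\le T}\bigl(1+(e^{\lambda}-1)p_{j,t}\bigr)\;\le\;e^{(e^{\lambda}-1)\mu}.
\]
Applying Markov's inequality to $e^{\lambda X}$ with the optimal choice $\lambda=\ln(1+\delta)$ gives $\Pr{X\ge(1+\delta)\mu}\le\bigl(e^{\delta}/(1+\delta)^{1+\delta}\bigr)^{\mu}$, and the elementary estimate $(1+\delta)\ln(1+\delta)\ge\delta+\delta^2/3$ for $\delta\in(0,1]$ turns this into the claimed bound $\Pr{X\ge(1+\delta)\mu}\le e^{-\delta^2\mu/3}$.

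I expect step (ii) to be the main obstacle: one must be certain that the negative regression established in Lemma~\ref{lem:neg_regression} is strong enough to license the product bound on $\mathbb{E}[e^{\lambda X}]$ over \emph{both} index sets at once — across the tokens, which are moved in correlated without-replacement batches at each node, and across the $T$ successive $\tmix$-windows, between which the locations are genuinely dependent. This is precisely what the generalization of the negative-correlation result of \textcite{DBLP:conf/focs/SauerwaldS12} is built to handle; the remaining care lies in checking that the conditioning used in step (i) does not spoil the majorization of Lemma~\ref{lem:diff_majorized_by_rw}, and that the closure property of negative regression is invoked only for monotone functions.
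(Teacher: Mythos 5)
There is a genuine gap in your step (ii). You assert that the whole family $\{X_{j,t}\}_{j\in B,\,t\le T}$ satisfies negative regression jointly over the tokens \emph{and} over the $T$ time windows, and you factorize the moment generating function over both indices at once. But Lemma~\ref{lem:neg_regression} establishes negative regression only for the one-step assignment variables of tokens in a \emph{single} round given a fixed configuration, and Lemma~\ref{lem:diff_majorized_by_rw} converts this into the product bound $\Pr{\bigcap_{j\in B'}(w_j^{\mathscr{S}}(t)\in D)}\le\prod_{j\in B'}\Pr{w_j^{\mathscr{W}}(t)\in D}$ for token locations at a \emph{single} time $t$. Neither result controls the correlation of the same token's location across two different windows, and that correlation need not be negative: conditioned on token $j$ sitting at $u$ at time $t\cdot\tmix$, its probability of being at $u$ at time $(t+1)\cdot\tmix$ can exceed its unconditional probability (the smoothing guarantee only pins both down to within $n^{-5}$ of $1/n$, it does not fix their order). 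So the inequality $\Ex{e^{\lambda X}}\le\prod_{j,t}\Ex{e^{\lambda X_{j,t}}}$ is not licensed by the lemmas as you invoke them, and your own closing remark that the paper's negative-correlation machinery ``is built to handle'' the across-window dependence is not accurate.

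The paper closes exactly this hole by a different device for the time dimension: it introduces indicators $Y_{j,t}$ for independent random walks launched from token $j$'s actual position at time $(t-1)\cdot\tmix$, uses Lemma~\ref{lem:diff_majorized_by_rw} only \emph{within} each window to dominate $\sum_{j\in B}X_{j,t}$ by $\sum_{j\in B}Y_{j,t}$, observes that $\Pr{Y_{j,t}=1|Y_{1,1},\dots,Y_{j-1,t}}\le 1/n+1/n^5$ by the definition of $\tmix$, and then invokes the stochastic domination by a $\Bin(T\cdot\abs{B},p)$ variable from Lemma~\ref{BoundedChernoff}. Your argument can be repaired in the same spirit without abandoning the MGF formulation: condition on the configuration at time $(t-1)\cdot\tmix$, use the within-window negative correlation together with $\Pr{X_{j,t}=1|c((t-1)\tmix)}\le p$ to get a uniform bound $\bigl(1+(e^{\lambda}-1)p\bigr)^{\abs{B}}$ on the conditional MGF of each window's contribution, and iterate over $t$. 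Your step (i) and the final Chernoff computation are fine as written (and your $e^{-\delta^2\mu/3}$ supplies the minus sign missing from the lemma statement).
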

The lemma's proof is a relatively straightforward consequence of Lemma~\ref{lem:diff_majorized_by_rw} (which is stated further below) and can be found at the end of this section.
Together, these lemmas generalize a result given in~\cite{DBLP:conf/focs/SauerwaldS12} to settings where nodes exchange load with more than one neighbor at a time, such that we have to deal with more complex dependencies.

\subsubsection*{Separating the Plurality via Chernoff}
Equipped with the Chernoff bound from Lemma~\ref{lem:chernoff_for_shuffle}, we prove concentration of the counter values and, subsequently, Theorem~\ref{thm:main}.
\begin{lemma}\label{onetwothree}
Let $c\geq 12$.
For every time $t\geq c\cdot T\cdot\tmix$ there exist values $\ell_{\top}>\ell_{\bot}$ such that
\begin{enumerate}
\item For all nodes $w$ with $\opinion{w}\geq2$ we have (w.h.p.) $\cnt{w}\leq\ell_{\bot}$.
\item For all nodes $v$ with $\opinion{\disguisemath{w}{v}}=1$ we have (w.h.p.) $\cnt{\disguisemath{w}{v}}\geq\ell_{\top}$.
\end{enumerate}
\end{lemma}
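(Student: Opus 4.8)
The plan is to reduce both parts to the Chernoff bound of Lemma~\ref{lem:chernoff_for_shuffle}, viewing each counter as exactly the sum that lemma controls. First I would note that, once the update part has fired $T'$ times (at times $\tmix,2\tmix,\dots,T'\tmix$), the counter of a node $u$ with $\opinion{u}=i$ is $\cnt{u}=\sum_{s=1}^{T'}\sum_{j\in B_i}X_{j,s}$, where $B_i$ is the (fixed) set of all $\gamma n_i$ tokens labelled $i$ and $X_{j,s}=1$ iff token $j$ sits on $u$ at time $s\tmix$. For a fixed time $t\geq c\,T\,\tmix$ we have $T'=\lfloor t/\tmix\rfloor\geq cT$, hence $\gamma T'\geq\frac{c\log n}{\alpha^2T}\cdot cT=\frac{c^2\log n}{\alpha^2}$; this is precisely why the time bound carries the extra factor $c$ over the naive $T\tmix$, and it is what makes the tail bounds summable. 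I would then fix the two thresholds
\[
\ell_{\bot}\coloneqq\tfrac{\gamma T'}{n}\bigl(n_2+\tfrac{\alpha n}{3}\bigr),\qquad
\ell_{\top}\coloneqq\tfrac{\gamma T'}{n}\bigl(n_2+\tfrac{2\alpha n}{3}\bigr),
\]
so that $\ell_{\top}-\ell_{\bot}=\tfrac{\alpha\gamma T'}{3}>0$; since $n_1=n_2+\alpha n$, the value $\ell_{\top}$ lies strictly below $\tfrac{\gamma n_1T'}{n}$ (the typical counter of a plurality node), while $\ell_{\bot}$ lies strictly above $\tfrac{\gamma n_iT'}{n}$ for every $i\geq2$.

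For part~(a), take $w$ with $\opinion{w}=i\geq2$. Applying Lemma~\ref{lem:chernoff_for_shuffle} with $B=B_i$ and $T=T'$ gives the reference value $\mu_i=(1/n+1/n^5)\gamma n_iT'=(1+o(1))\tfrac{\gamma n_iT'}{n}$, and since $n_i\leq n_2$ we have $\ell_{\bot}\geq\tfrac{\gamma T'}{n}(n_i+\tfrac{\alpha n}{3})=(1+\delta_i)\mu_i$ with $\delta_i\geq(1-o(1))\tfrac{\alpha n}{3n_i}$. The resulting exponent is $\delta_i^2\mu_i/3\geq(1-o(1))\tfrac{\alpha^2n\,\gamma T'}{27\,n_i}\geq(1-o(1))\tfrac{\alpha^2\gamma T'}{27}\geq(1-o(1))\tfrac{c^2}{27}\log n\geq5\log n$ for $c\geq12$ and $n$ large, so $\Pr{\cnt{w}>\ell_{\bot}}\leq n^{-5}$. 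When $n_i$ is so small that $\delta_i>1$, I would instead use the large-deviation form of the bound, whose exponent is at least $(1-o(1))\tfrac{\alpha\gamma T'}{9}\geq(1-o(1))\tfrac{c^2}{9}\log n$, which is even larger.

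For part~(b) I would sidestep the need for a lower-tail inequality by passing to the complement. A plurality node $v$ holds exactly $\gamma$ tokens in every round, so $\cnt{v}=\gamma T'-\sum_{s=1}^{T'}\sum_{j\in B'}X_{j,s}$, where $B'$ is the set of the $\gamma(n-n_1)$ tokens \emph{not} labelled $1$. Hence $\{\cnt{v}<\ell_{\top}\}$ is exactly the event that $\sum_{s\leq T'}\sum_{j\in B'}X_{j,s}$ exceeds $\gamma T'-\ell_{\top}=\tfrac{\gamma T'}{n}\bigl((n-n_1)+\tfrac{\alpha n}{3}\bigr)$, i.e.\ exceeds its reference value $\mu'=(1+o(1))\tfrac{\gamma(n-n_1)T'}{n}$ by a relative amount $\delta'\geq(1-o(1))\tfrac{\alpha n}{3(n-n_1)}$ --- structurally the same situation as in part~(a) --- so Lemma~\ref{lem:chernoff_for_shuffle} again yields $\Pr{\cnt{v}<\ell_{\top}}\leq n^{-5}$. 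A union bound over the at most $n$ nodes of each type then proves, for this fixed $t$, that both statements hold w.h.p.

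I expect essentially all of the difficulty to sit upstream, in Lemma~\ref{lem:chernoff_for_shuffle} and the negative-regression result (Lemma~\ref{lem:neg_regression}) behind it; given that bound as a black box, the only genuine care points here are (i) recognising $\cnt{u}$ as the precise sum the lemma controls; (ii) using the complement in part~(b) so that a single upper-tail inequality handles both directions; and (iii) checking that the factor-$c$ slack in the time, combined with $\gamma\geq c\log n/(\alpha^2T)$ and $c\geq12$, forces every failure probability below $n^{-5}$ --- the worst case being opinion~$2$, where $n_i$ is largest and the concentration weakest, together with the large-deviation regime for very small $n_i$.
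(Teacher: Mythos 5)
Your proposal is correct and follows essentially the same route as the paper: identify $\cnt{u}$ with the checkpoint sum controlled by Lemma~\ref{lem:chernoff_for_shuffle}, bound part~(a) by a direct upper-tail application to the tokens of label $\opinion{w}$, handle part~(b) by writing $\cnt{v}=\gamma T'-Y$ and applying the same upper-tail bound to the non-plurality tokens, and finish with a union bound. The only (harmless) difference is cosmetic: you place the thresholds at fixed fractions of the gap $\alpha\gamma T'$, which makes $\ell_{\top}>\ell_{\bot}$ immediate, whereas the paper uses square-root-scale deviations around the means and checks that inequality separately.
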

\begin{proof}
For two nodes $v$ and $w$ with $\opinion{v}=1$ and $\opinion{w}\geq 2$,
$\mu_i\coloneqq(1/n+1/n^5)c\cdot T\cdot\gamma\cdot n_k$ for all $i\in\intcc{k}$, and $\mu'\coloneqq(1/n+1/n^5)c\cdot T\cdot\gamma\cdot(n-n_1)$.
For $i\in\intcc{k}$ define
\begin{align*}
\ell_{\bot}(i) &\coloneqq \mu_i+\sqrt{c^2\cdot\log n\cdot T\cdot\gamma\frac{n_i}{n}}
\qquad\qquad\qquad\text{and}\\
\ell_{\top}    &\coloneqq T\gamma- \mu'-\sqrt{c^2\cdot\log n\cdot T\cdot\gamma\frac{n-n_1}{n}}
.
\end{align*}
We set $\ell_{\bot}\coloneqq\ell_{\bot}(2)$.
It is easy to show that $\ell_{\bot}<\ell_{\top}$.
Now, let all $\gamma n$ tokens be labeled from $1$ to $\gamma n$.
It remains to prove the lemma's statements:
%%%KEEP THIS%%% I changed \ell top but it should be (ignoring the lower order term) the same thing
%We first show that $\ell_{\top}>\ell_{\bot}$:
%\begin{equation*}
%\begin{aligned}
%\ell_{\top}-\ell_{\bot}
%&\geq T\gamma\cdot(\tfrac{n_1}{n}-\tfrac{n_1}{n^5})-\sqrt{c\log n\cdot T\cdot\gamma\tfrac{n-n_1}{n}}-T\gamma\cdot(\tfrac{n_2}{n}+\tfrac{n_2}{n^5})-\sqrt{c\log n\cdot T\cdot\gamma\tfrac{n_2}{n}}\\
%&\geq \sqrt{T\gamma}\cdot\left(\sqrt{T\gamma}\cdot\left(\tfrac{n_1-n_2}{n}-\tfrac{2}{n^4}\right)-\sqrt{c\log n}\cdot\left(\sqrt{\tfrac{n-n_1}{n}}+\sqrt{\tfrac{n_2}{n}} \right) \right)\\
%&\geq \sqrt{T\gamma}\cdot\left(\sqrt{\tfrac{c\log n}{\alpha^2}}\left(\alpha -\tfrac{2}{n^4}\right)-\sqrt{c\log n}\cdot\left(\sqrt{\tfrac{n-n_1}{n}}+\sqrt{\tfrac{n_2}{n}} \right) \right)\\
%&\geq \sqrt{T\gamma}\cdot\left(0.9\sqrt{c\log n}-2\sqrt{c\log n} \right) > 0.
%\end{aligned}
%\end{equation*}
%where the last inequality holds for $c\geq5c$.
%%%
\begin{enumerate}
\item For the first statement, consider a node $w$ with $\opinion{w}\geq2$ and set $\lambda(\opinion{w})\coloneqq\ell_{\bot}(\opinion{w})-\mu_{\opinion{w}}=\sqrt{c^2\cdot \log n\cdot T\cdot\gamma\cdot n_{\opinion{w}}/n}$.
	Set the random indicator variable $X_{i,t}$ to be $1$ if and only if $i$ is on node $w$ at time $t$ and if $i$'s label is $\opinion{w}$.
	Let $\cnt{w}=\sum_{i\in\intcc{\gamma n}}\sum_{j\leq T}X_{i,j\cdot\tmix}$.
	We compute
	\begin{equation}
	\begin{aligned}
	\Pr{\cnt{w}\geq\ell_{\bot}}
	&\leq \Pr{\cnt{w}\geq\mu_{\opinion{w}}+\lambda(\opinion{w})}\\
	&=    \Pr{\cnt{w}\geq\left(1+\frac{\lambda(\opinion{w})}{\mu_{\opinion{w}}}\right)\cdot\mu_{\opinion{w}}}\\
	&\leq \exp\left(-\frac{\lambda^2(\opinion{w})}{3\mu_{\opinion{w}}}\right)\leq\exp\left(-\frac{c}{6}\log n\right),
	\end{aligned}
	\end{equation}
	where the last line follows by Lemma~\ref{lem:chernoff_for_shuffle} applied to $\cnt{w}=\sum_{i\in\intcc{\gamma n}}\sum_{j\leq T}X_{i,j\cdot\tmix}$ and setting $B$ to the set of all tokens with label $\opinion{w}$.
	Hence, the claim follows for $c$ large enough after taking the union bound over all $n-n_1\leq n$ nodes $w$ with $\opinion{w}\geq2$.
\item For the lemma's second statement, consider a node $v$ with $\opinion{v}=1$ and set $\lambda'\coloneqq \mu'-\ell_{\top}$.
	Define the random indicator variable $Y_{i,t}$ to be $1$ if and only if token $i$ is on node $v$ at time $t$ and if $i$'s label is not $1$.
	Set $Y=\sum_{j\leq T}\sum_{i\in\intcc{\gamma n}}Y_{i,j\cdot\tmix}$ and note that $\cnt{v}=T\gamma-Y$.
	We compute
	\begin{align*}
	\Pr{\cnt{v}\leq\ell_{\top}}
	&=    \Pr{T\gamma - Y\leq\ell_{\top}}\\
	&=    \Pr{T\gamma-Y\leq T\gamma -\mu'-\lambda'}=\Pr{Y\geq \mu'+\lambda'}\\
	&=    \Pr{Y\geq\left(1+\frac{\lambda'}{\mu'}\right)\cdot \mu'}\leq\exp\left(-\frac{\lambda'^2}{3\mu'}\right)\\
	&\leq \exp\left(\frac{c}{6}\log n\right),
	\end{align*}
	where the first inequality follows by Lemma~\ref{lem:chernoff_for_shuffle} applied to $Y$ and using $B$ to denote the set of all tokens with a label other than $1$.
	Hence, the claim follows for $c$ large enough after taking the union bound over all $n_1\leq n$ nodes $v$ with $\opinion{u}\geq2$.
	\qedhere
\end{enumerate}
\end{proof}
With Lemma~\ref{onetwothree}, we can now prove our main result:
\begin{proof}[Proof of Theorem~\ref{thm:main}]
Fix an arbitrary time $t\in\intcc{c\cdot T\cdot\tmix,N}$ with $\tmix\mid t$, where $c$ is the constant from the statement of Lemma~\ref{onetwothree}.
From Lemma~\ref{onetwothree} we have that (w.h.p.) the node $u$ with the highest counter $\cnt{u}$ has $\opinion{u}=1$ (ties are broken arbitrarily).
In the following we condition on $\opinion{u}=1$.
We claim that at time $t'=t+\tmix$ all nodes $v\in V$ have $\majest{v}=1$.
This is because the counters during the \enquote{broadcast part} (Lines~\ref{shufflesimple:line:broadcast_start} to~\ref{shufflesimple:line:broadcast_stop}) propagate the highest counter received after time $t$.
The time $\tau$ until all nodes $v\in V$ have $\majest{v}=1$ is bounded by the mixing by definition:
In order for $\intcc{t,t'}$ to be $1/n^5$-smoothing, the random walk starting at $u$ at time $t$ is with probability at least $1/n-1/n^5$ on node $v$ and, thus, there exists a path from $u$ to $v$ (with respect to the communication matrices).
If there is such a path for every node $v$, the counter of $u$ was also propagated to that $v$ and we have $\tau\leq\tmix$.
Consequently, at time $t'$ all nodes have the correct majority opinion.
This implies the desired time bound.
For the memory requirements, note that each node $u$ stores $\gamma$ tokens with a label from the set $\intcc{k}$ ($\gamma\cdot\LDAUOmicron{\log k}$ bits), three opinions (its own, its plurality guess, and the dominating opinion; $\LDAUOmicron{\log k}$ bits),  the two counters $\cnt{u}$ and $\estimate{u}$ and the time step counter.
The memory to store the counter $\cnt{u}$ and $\estimate{u}$  is $\LDAUOmicron{\gamma T}$.
Finally, the time step counter is bounded by $\LDAUOmicron{\log(T\cdot \tmix)}$ bits.
Note that it is easy to implement a rolling counter when this counter \enquote{overflows}.
This yields the claimed space bound.
\end{proof}

\subsubsection*{Majorizing \shuffle by Random Walks}
We now turn to the proof of Lemma~\ref{lem:chernoff_for_shuffle}.
While our \shuffle protocol assumes that $2\Delta$ divides $\gamma$, we assume here the slightly weaker requirement that $\rw_t[u,v]\cdot\gamma\in\N$ for any $u,v\in V$ and $t\in\N$.
To ease the discussion, we consider $u$ as a neighbor of itself and speak of $d_t(u)+1$ neighbors.
For $i\in\intcc{d_t(u)+1}$, let $N_t(u,i)\in V$ denote the $i$-th neighbor of $u$ (in an arbitrary order).
We also need some notation for the shuffle part of our protocol.
To this end, consider a node $u$ at time $t$ and let $u$'s tokens be numbered from $1$ to $\gamma$.
Our assumption on $\gamma$ allows us to partition the tokens into $d_t(u)+1$ disjoint subsets (\emph{slots}) $S_i\subseteq\intcc{\gamma}$ of size $\rw_t[u,v]\cdot\gamma$ each, where $v=N_t(u,i)$.
Let $\pi_{t,u}\colon\intcc{\gamma}\to\intcc{\gamma}$ be a random permutation.
Token $j$ with $\pi_{t,u}(j)\in S_i$ is sent to $u$'s $i$-th neighbor.

Let $\mathscr{S}$ denote our random \shuffle process and $\mathscr{W}$ the random walk process in which each of the $\gamma n$ tokens performs an independent random walk according to the random walk matrices $(\rw_t)_{t\in\N}$.
We use $w_j^{\mathscr{P}}(t)$ to denote the position of token $j$ after $t$ steps of a process $\mathscr{P}$.
Without loss of generality, we assume $w_j^{\mathscr{S}}(0)=w_j^{\mathscr{W}}(0)$ for all tokens $j$.
While there are strong correlations between the tokens' movements in $\mathscr{S}$ (e.g., not all tokens can move to the same neighbor), Lemma~\ref{lem:diff_majorized_by_rw} shows that these correlations are negative.
Before proving Lemma~\ref{lem:diff_majorized_by_rw} we present the following definitions and auxilary results that are used in its proof.

\begin{definition}[{Neg.~Regression~\cite[Def.~21]{DBLP:journals/rsa/DubhashiR98}}]\label{def:neg_regression}
A vector $(X_1,X_2,\dots,X_n)$ of random variables is said to satisfy the \emph{negative regression} condition if $\Ex{f(X_l,l\in\mathscr{L})|X_r=x_r,r\in\mathscr{R}}$ is non-increasing in each $x_r$ for any disjoint $\mathscr{L},\mathscr{R}\subseteq\intcc{n}$ and for any non-decreasing function $f$.
\end{definition}

\begin{lemma}[{\cite[Lemma~26]{DBLP:journals/rsa/DubhashiR98}}]\label{lem:neg_regression_exp_inequality}
Let $(X_1,X_2,\dots,X_n)$ satisfy the negative regression condition and consider an arbitrary index set $I\subseteq\intcc{n}$ as well as any family of non-decreasing functions $f_i$ ($i\in\set I$).
Then, we have
\begin{equation}
\Ex{\prod_{i\in I}f_i(X_i)}\leq\prod_{i\in I}\Ex{f_i(X_i)}
\end{equation}
for any $I\subseteq\intcc{n}$ and for any non-decreasing functions $f_i$.
\end{lemma}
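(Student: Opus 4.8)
The plan is a short induction on $m\coloneqq\abs I$, removing the coordinates of $I$ one at a time; the only ingredient beyond elementary probability is the classical fact that a non-decreasing and a non-increasing function of the \emph{same} random variable are negatively correlated. Throughout I take the $f_i$ to be non-negative, as they are in every use of the lemma (for example $f_i(x)=e^{\theta x}$ with $\theta\ge 0$ in the Chernoff estimate behind Lemma~\ref{lem:chernoff_for_shuffle}); non-negativity is exactly what makes a product of several $f_i$'s again a non-decreasing function of the corresponding coordinates.

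For $m\le 1$ the inequality is an equality, so suppose it holds for all index sets of size $m-1$ and write $I=\set{i_1,\dots,i_m}$, $I'\coloneqq I\setminus\set{i_m}$. Put
\begin{equation*}
g(x)\coloneqq\mathbb{E}\!\left[\prod_{i\in I'}f_i(X_i)\,\middle|\,X_{i_m}=x\right].
\end{equation*}
Since $\prod_{i\in I'}f_i$ is a non-decreasing function of $(X_i)_{i\in I'}$, the negative regression condition applied with $\mathscr L=I'$ and $\mathscr R=\set{i_m}$ says exactly that $g$ is non-increasing. Conditioning on $X_{i_m}$ and using the tower property gives $\mathbb{E}[\prod_{i\in I}f_i(X_i)]=\mathbb{E}[f_{i_m}(X_{i_m})\,g(X_{i_m})]$.

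Now $f_{i_m}$ is non-decreasing, $g$ is non-increasing, and both are functions of the single random variable $X_{i_m}$. Taking an independent copy $X'$ of $X_{i_m}$, the product $\bigl(f_{i_m}(X_{i_m})-f_{i_m}(X')\bigr)\bigl(g(X_{i_m})-g(X')\bigr)$ is pointwise $\le 0$; taking expectations and using that $X_{i_m}$ and $X'$ are i.i.d.\ gives
\begin{equation*}
\mathbb{E}\bigl[f_{i_m}(X_{i_m})\,g(X_{i_m})\bigr]\le\mathbb{E}\bigl[f_{i_m}(X_{i_m})\bigr]\cdot\mathbb{E}\bigl[g(X_{i_m})\bigr]=\mathbb{E}\bigl[f_{i_m}(X_{i_m})\bigr]\cdot\mathbb{E}\!\left[\prod_{i\in I'}f_i(X_i)\right],
\end{equation*}
the last step again by the tower property. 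Finally, $(X_i)_{i\in I'}$ still satisfies the negative regression condition (disjoint subsets of $I'$ are disjoint subsets of $\intcc n$), so the induction hypothesis yields $\mathbb{E}[\prod_{i\in I'}f_i(X_i)]\le\prod_{i\in I'}\mathbb{E}[f_i(X_i)]$. Combining these gives $\mathbb{E}[\prod_{i\in I}f_i(X_i)]\le\prod_{i\in I}\mathbb{E}[f_i(X_i)]$, completing the induction.

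I do not anticipate a genuine obstacle here; this is in essence the argument of~\cite{DBLP:journals/rsa/DubhashiR98}. The points that need care are purely bookkeeping: invoking the negative regression definition with exactly the right $\mathscr L$ and $\mathscr R$, verifying that partial products remain monotone (the one place where non-negativity of the $f_i$ is used), and writing out the one-variable covariance inequality explicitly, since that is the single step exploiting that $f_{i_m}(X_{i_m})$ and $g(X_{i_m})$ are functions of the \emph{same} variable.
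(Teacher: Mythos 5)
The paper does not prove this lemma at all: it is imported verbatim as Lemma~26 of Dubhashi--Ranjan, so there is no in-paper argument to compare against. Your proof is a correct, self-contained derivation and is essentially the standard argument from that source: induct on $\abs{I}$, peel off one coordinate, use negative regression with $\mathscr{L}=I'$ and $\mathscr{R}=\set{i_m}$ to see that $g(x)=\mathbb{E}[\prod_{i\in I'}f_i(X_i)\mid X_{i_m}=x]$ is non-increasing, and then apply the one-variable Chebyshev correlation inequality to the non-decreasing $f_{i_m}$ and the non-increasing $g$. All three steps (the tower-property identity, the pointwise sign of $(f_{i_m}(X)-f_{i_m}(X'))(g(X)-g(X'))$, and the reuse of the induction hypothesis on the same NRC vector with the smaller index set) check out.

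The one point worth making explicit is the non-negativity of the $f_i$, which you correctly identify as necessary for $\prod_{i\in I'}f_i$ to remain non-decreasing (and indeed for the inequality to hold at all once $\abs{I}\geq 3$). The lemma as stated in the paper omits this hypothesis, but it is present in the cited source and is satisfied in the only place the lemma is used here: the functions $h_j$ in the proof of Lemma~\ref{lem:diff_majorized_by_rw} take values in $\intcc{0,1}$. So your added assumption costs nothing and closes a small imprecision in the statement rather than introducing one.
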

\begin{claim}\label{clm:pr_exp_identities}
Fix a time $t'\in\set{0,1,\dots,t-1}$ and consider an arbitrary configuration $c$.
Let $\mathscr{E}_{t'}$, $X_j$ and $h_j$ be defined as in the proof of Lemma~\ref{lem:diff_majorized_by_rw}.
Then the following identities hold:
\begin{enumerate}
\item\label{clm:pr_exp_identities:a} $\Pr{\mathscr{E}_{t'+1}|c(t')=c}=\Ex{\prod_{j\in B}h_j(X_j)|c(t')=c}$, and
\item\label{clm:pr_exp_identities:b} $\Pr{\mathscr{E}_{t'}|c(t')=c}=\prod_{j\in B}\Ex{h_j(X_j)|c(t')=c}$.
\end{enumerate}
\end{claim}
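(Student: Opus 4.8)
The plan is to unfold both identities by conditioning one step at a time, leaning on two structural facts that are built into the definitions of $\mathscr{S}$ and $\mathscr{W}$. First, in the pure random-walk process $\mathscr{W}$ the $\gamma n$ tokens evolve independently once their positions at any fixed time are given. Second, a \emph{single} token makes exactly the same one-step move under $\mathscr{S}$ as under $\mathscr{W}$: from $u$ it is routed into slot $S_i$ (that is, to $N_{t'}(u,i)$) with probability $\abs{S_i}/\gamma=\rw_{t'}[u,N_{t'}(u,i)]$, which is precisely why the slots were sized as $\rw_{t'}[u,N_{t'}(u,i)]\cdot\gamma$ and $\pi_{t',u}$ taken uniform. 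Recall that $\mathscr{E}_{t'}$, $X_j$, $h_j$ are as in the proof of Lemma~\ref{lem:diff_majorized_by_rw}: $\mathscr{E}_{t'}$ is the event, in the hybrid process that runs $\mathscr{S}$ for $t'$ steps and then switches every token to an independent random walk, that all tokens of $B$ occupy $u$ at the final time $t$; given $c(t')=c$, the variable $X_j$ is the position of token $j$ after one further $\mathscr{S}$-step applied to $c$; and $h_j(v)$ is the probability that an independent random walk started at $v$ at time $t'+1$ is on $u$ at time $t$.

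For part~\ref{clm:pr_exp_identities:a} I would condition, on top of $c(t')=c$, on the tuple $(X_j)_j$ of positions reached after the single $\mathscr{S}$-step. Because the hybrid defining $\mathscr{E}_{t'+1}$ runs independent random walks on all tokens from time $t'+1$ onward, the events \enquote{token $j$ on $u$ at time $t$}, $j\in B$, are then conditionally independent with probabilities $h_j(X_j)$, so $\Pr{\mathscr{E}_{t'+1}|(X_j)_j,\,c(t')=c}=\prod_{j\in B}h_j(X_j)$. Averaging over the shuffle permutations and applying the tower rule yields identity~\ref{clm:pr_exp_identities:a}.

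For part~\ref{clm:pr_exp_identities:b}, in the hybrid defining $\mathscr{E}_{t'}$ all tokens already follow independent random walks from time $t'$ on, so conditioned on $c(t')=c$ their trajectories are independent and $\Pr{\mathscr{E}_{t'}|c(t')=c}=\prod_{j\in B}q_j(c(j))$, where $q_j(v)$ is the probability that a random walk governed by $(\rw_r)_{r\geq t'}$ and started at $v$ at time $t'$ is on $u$ at time $t$. Splitting $q_j$ over the walk's position at time $t'+1$ gives $q_j(c(j))=\sum_v\rw_{t'}[c(j),v]\,h_j(v)$; since, by the second fact above, the law of $X_j$ (one $\mathscr{S}$-step from $c(j)$) is exactly $\rw_{t'}[c(j),\cdot\,]$, this sum equals $\Ex{h_j(X_j)|c(t')=c}$, which is identity~\ref{clm:pr_exp_identities:b}.

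I expect the one genuinely delicate point to be the marginal-matching used in part~\ref{clm:pr_exp_identities:b}: one must check carefully that a uniformly random permutation $\pi_{t',u}$ places a fixed token in slot $S_i$ with probability $\abs{S_i}/\gamma$, and keep in mind throughout that this agreement holds \emph{only} between one-dimensional marginals — the joint law of $(X_j)_j$ is emphatically not a product measure, and it is exactly this gap that the negative regression of $(X_j)_j$, combined with Lemma~\ref{lem:neg_regression_exp_inequality}, is used to overcome in the proof of Lemma~\ref{lem:diff_majorized_by_rw}. Everything else is bookkeeping with the tower rule and the independence of the walks in $\mathscr{W}$.
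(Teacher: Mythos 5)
Your proof is correct and takes essentially the same route as the paper's: part~\ref{clm:pr_exp_identities:a} is the tower rule over the one-step outcome combined with the conditional independence of the random walks after time $t'+1$, and part~\ref{clm:pr_exp_identities:b} uses that $\mathscr{SW}(t')$ already consists of independent walks from time $t'$ onward together with the matching of single-token one-step marginals between $\mathscr{S}$ and $\mathscr{W}$ (the point the paper compresses into \enquote{swap the expectation and the product}, and which you rightly isolate as the delicate step). The only immaterial discrepancies are notational: the target $D$ is an arbitrary node set rather than a single node $u$, the paper indexes the relevant shuffle step as $t'+1$, and its $X_j$ and $h_j$ range over neighbor indices rather than nodes.
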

\begin{proof}
We use the shorthand $d(u_j)=d_{t'+1}(u_j)$.
Remember that each $X_j$ indicates to which of the $d(u_j)+1$ neighbors of $u_j$ (where $u_j$ is considered a neighbor of itself) a token $j$ moves during time step $t'+1$.
Thus, given the configuration $c(t')=c$ immediately before time step $t'+1$, there is a bijection between any possible configuration $c(t'+1)$ and outcomes of the random variable vector $\bm{X}=(X_j)_{j\in\intcc{\gamma n}}$.
Let $c_{\bm{x}}$ denote the configuration corresponding to a concrete outcome $\bm{X}=\bm{x}\in\intcc{d(u_j)+1}^{\gamma n}$.
Thus, we have $\Pr{c(t'+1)=c_{\bm{x}}|c(t')=c}=\Pr{\bm{X}=\bm{x}|c(t')=c}$, and conditioning on $c(t'+1)$ is equivalent to conditioning on $\bm{X}$ and $c(t')$.
For the claim's first statement, we calculate
\begin{equation*}
\begin{aligned}
      &\Pr{\mathscr{E}_{t'+1}|c(t')=c}\\
{}={} &\sum_{c_{\bm{x}}}\Pr{\mathscr{E}_{t'+1}|c(t'+1)=c_{\bm{x}}}\cdot\Pr{c(t'+1)=c_{\bm{x}}|c(t')=c}\\
{}={} &\sum_{c_{\bm{x}}}\prod_{j\in B}\Pr{w_j^{\mathscr{SW}(t'+1)}(t)\in D|\bm{X}=\bm{x},c(t')=c}\cdot\Pr{\bm{X}=\bm{x}|c(t')=c}\\
{}={} &\sum_{c_{\bm{x}}}\prod_{j\in B}h_j(x_j)\cdot\Pr{\bm{X}=\bm{x}|c(t')=c}\\
{}={} &\sum_{\bm{x}}\prod_{j\in B}h_j(x_j)\cdot\Pr{\bm{X}=\bm{x}|c(t')=c}=\Ex{\prod_{j\in B}h_j(X_j)|c(t')=c}
.
\end{aligned}
\end{equation*}
Here, we first apply the law of total probability.
Then, we use the bijection between $c(t'+1)$ and $\bm{X}$ (if $c(t')$ is given) and that the process $\mathscr{SW}(t'+1)$ consists of independent random walks if $c(t'+1)$ is fixed.
Finally, we use the definition of the auxiliary functions $h_j(i)$, which equal the probability that a random walk starting at time $t'+1$ from $u_j$'s $i$-th neighbor reaches a node from $D$.

For the claim's second statement, we do a similar calculation for the process $\mathscr{SW}(t')$.
By definition, this process consists already from time $t'$ onwards of a collection of independent random walks.
Thus, we can swap the expectation and the product in the last term of the above calculation, yielding the desired result.
\end{proof}

\begin{lemma}\label{lem:diff_majorized_by_rw}
Consider a time $t\geq0$, a token $j$, and node $v$.
Moreover, let $B\subseteq\intcc{\gamma n}$ and $D\subseteq V$ be arbitrary subset of tokens and nodes, respectively.
Then, the following holds:
\begin{enumerate}
\item $\Pr{w_j^{\mathscr{S}}(t)=v}=\Pr{w_j^{\mathscr{W}}(t)=v}$ and
\item $\Pr{\bigcap_{j\in B}\left(w_j^{\mathscr{S}}(t)\in D\right)}\leq\Pr{\bigcap_{j\in B}\left(w_j^{\mathscr{W}}(t)\in D\right)} \\ =\prod_{j\in B}\Pr{w_j^{\mathscr{W}}(t)\in D}$.
\end{enumerate}
\end{lemma}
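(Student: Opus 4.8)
\emph{Equal marginals (first identity).} The single-token marginal of $\mathscr{S}$ is, by construction, a time-inhomogeneous random walk with transition matrices $(\rw_t)$: in round $t$ the $\gamma$ tokens sitting at a node $u$ are split into slots $S_1,\dots,S_{d_t(u)+1}$ with $\abs{S_i}=\rw_t[u,N_t(u,i)]\cdot\gamma$, and token $j$ is routed to $N_t(u,i)$ precisely when $\pi_{t,u}(j)\in S_i$. Since $\pi_{t,u}(j)$ is uniform on $\intcc{\gamma}$, this happens with probability $\abs{S_i}/\gamma=\rw_t[u,N_t(u,i)]$, which is exactly the corresponding $\mathscr{W}$-transition probability. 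Because $w_j^{\mathscr{S}}(0)=w_j^{\mathscr{W}}(0)$, a one-line induction on $t$ via the tower rule then gives $\Pr{w_j^{\mathscr{S}}(t)=v}=\Pr{w_j^{\mathscr{W}}(t)=v}$.

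\emph{Hybrid processes.} The displayed equality is immediate: in $\mathscr{W}$ all $\gamma n$ tokens perform mutually independent walks. For the inequality I would interpolate between $\mathscr{W}$ and $\mathscr{S}$ using, for $0\le s\le t$, the hybrid process $\mathscr{SW}(s)$ that runs the true shuffle dynamics during rounds $1,\dots,s$ and then lets every token walk independently during rounds $s+1,\dots,t$. Thus $\mathscr{SW}(0)=\mathscr{W}$, $\mathscr{SW}(t)$ agrees in law with $\mathscr{S}$ up to time $t$, and $\mathscr{SW}(t')$ and $\mathscr{SW}(t'+1)$ agree in law up to time $t'$. Setting $\mathscr{E}_s\coloneqq\bigcap_{j\in B}\left(w_j^{\mathscr{SW}(s)}(t)\in D\right)$, it suffices to prove $\Pr{\mathscr{E}_{t'+1}}\le\Pr{\mathscr{E}_{t'}}$ for every $t'\in\set{0,\dots,t-1}$ and then telescope, since $\mathscr{E}_0$ and $\mathscr{E}_t$ are exactly the events on the right- and left-hand sides of the claimed inequality.

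\emph{The per-step estimate.} Fix $t'$ and condition on the time-$t'$ configuration $c(t')=c$ (which has the same law under $\mathscr{SW}(t')$ and $\mathscr{SW}(t'+1)$). Let $u_j$ denote the position of token $j$ in $c$, let $X_j\in\intcc{d_{t'+1}(u_j)+1}$ be the index of the neighbor of $u_j$ to which token $j$ is routed in the shuffle round $t'+1$, and let $h_j(i)$ be the probability that a random walk started at time $t'+1$ from $u_j$'s $i$-th neighbor lies in $D$ at time $t$; for each node, list its neighbors so that every $h_j$ is non-decreasing in $i$. By Claim~\ref{clm:pr_exp_identities}, $\Pr{\mathscr{E}_{t'+1}|c(t')=c}=\Ex{\prod_{j\in B}h_j(X_j)|c(t')=c}$ and $\Pr{\mathscr{E}_{t'}|c(t')=c}=\prod_{j\in B}\Ex{h_j(X_j)|c(t')=c}$. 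By Lemma~\ref{lem:neg_regression} the routing vector $(X_j)_j$ satisfies the negative regression condition of Definition~\ref{def:neg_regression} --- a property insensitive to the order in which each node's slots are listed, so the sorting above is harmless --- whence Lemma~\ref{lem:neg_regression_exp_inequality}, applied with index set $B$ and the non-decreasing functions $h_j$, yields $\Ex{\prod_{j\in B}h_j(X_j)|c(t')=c}\le\prod_{j\in B}\Ex{h_j(X_j)|c(t')=c}$. Averaging over $c$ gives $\Pr{\mathscr{E}_{t'+1}}\le\Pr{\mathscr{E}_{t'}}$, and chaining these from $t'=0$ to $t'=t-1$ proves $\Pr{\bigcap_{j\in B}(w_j^{\mathscr{S}}(t)\in D)}\le\Pr{\bigcap_{j\in B}(w_j^{\mathscr{W}}(t)\in D)}=\prod_{j\in B}\Pr{w_j^{\mathscr{W}}(t)\in D}$.

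\emph{Main obstacle.} Granting Claim~\ref{clm:pr_exp_identities} and Lemma~\ref{lem:neg_regression_exp_inequality}, the only genuinely hard ingredient is Lemma~\ref{lem:neg_regression}: showing that the destination vector of a single \emph{parallel} shuffle round, in which every node simultaneously re-permutes its own token multiset, satisfies negative regression. This is the generalization of the one-exchange estimate of \cite{DBLP:conf/focs/SauerwaldS12} to arbitrarily many simultaneous exchanges, and it is precisely where the dependencies introduced by truly parallel communication have to be tamed; I expect the bulk of the work (and of the risk of error) to live there. Everything else above is bookkeeping: noting that consecutive hybrids have a common law up to time $t'$ (so that conditioning on $c(t')$ is legitimate), and re-sorting each node's neighbor list by increasing probability of hitting $D$ so that the functions $h_j$ become non-decreasing as required by Lemma~\ref{lem:neg_regression_exp_inequality}.
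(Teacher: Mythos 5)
Your proposal is correct and follows essentially the same route as the paper's proof: the same hybrid interpolation $\mathscr{SW}(t')$, the same conditioning on $c(t')$, and the same combination of Claim~\ref{clm:pr_exp_identities}, Lemma~\ref{lem:neg_regression}, and Lemma~\ref{lem:neg_regression_exp_inequality}, with the telescoping over $t'$. Your more explicit treatment of the first identity (uniformity of $\pi_{t,u}(j)$ giving the marginal transition probabilities $\rw_t$) and your remark that re-sorting each node's neighbor list is harmless for negative regression are both fine elaborations of steps the paper leaves implicit.
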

\begin{proof}
The first statement follows immediately from the definition of our process.
For the second statement, note that the equality on the right-hand side holds trivially, since the tokens perform independent random walks in $\mathscr{W}$.
To show the inequality, we define intermediate processes $\mathscr{SW}(t')$ ($t'\leq t$) that perform $t'$ steps of $\mathscr{S}$ followed by $t-t'$ steps of $\mathscr{W}$.
By this definition, $\mathscr{SW}(0)$ is identical to $\mathscr{W}$ restricted to $t$ steps and, similar, $\mathscr{SW}(t)$ is identical to $\mathscr{S}$ restricted to $t$ steps.
Consider the events
\begin{equation}
\mathscr{E}_{t'}\coloneqq\bigcap_{j\in B}\left(w_j^{\mathscr{SW}(t')}(t)\in D\right)
,
\end{equation}
stating that all tokens from $B$ end up at nodes from $D$ under process $\mathscr{SW}(t')$.
The lemma's statement is equivalent to $\Pr{\mathscr{E}_t}\leq\Pr{\mathscr{E}_0}$.
To prove this, we show $\Pr{\mathscr{E}_{t'+1}}\leq\Pr{\mathscr{E}_{t'}}$ for all $t'\in\set{0,1,\dots,t-1}$.
Combining these inequalities yields the desired result.

Fix an arbitrary $t'\in\set{0,1,\dots,t-1}$ and note that $\mathscr{SW}(t')$ and $\mathscr{SW}(t'+1)$ behave identical up to and including step $t'$.
Hence, we can fix an arbitrary configuration (i.e., the location of each token) $c(t')=c$ immediately before time step $t'+1$.
For a token $j\in\intcc{\gamma n}$ let $u_j\in V$ denote its location in configuration $c$.
Remember that $\pi_{u,t'+1}$ denote the (independent) random permutations chosen by each node $u$ for time step $t'+1$.
To ease notation, we drop the time index $t'+1$ and write $\pi_u$ instead of $\pi_{u,t'+1}$ (and, similarly, $d(u)$ and $N(u,i)$ instead of $d_{t'+1}(u)$ and $N_{t'+1}(u,i)$).
For each token $j$ define a random variable $X_j\in\intcc{d(u_j)+1}$ with $X_j=i$ if and only if $\pi_{u_j}(j)\in S_i$.
In other words, $X_j$ indicates to which of $u_j$'s neighbors token $j$ is sent in time step $t'+1$.
We also introduce auxiliary functions $h_j\colon\intcc{d(u_j)+1}\to\intcc{0,1}$ defined by
\begin{equation}
h_j(i)\coloneqq\Pr{w_j^{\mathscr{W}}(t)\in D|w_j^{\mathscr{W}}(t'+1)=N(u_j,i)}
.
\end{equation}
These are the probability that a random walk starting at time $t'+1$ from $u_j$'s $i$-th neighbor ends up in a node from $D$.
We can assume (w.l.o.g.) that all $h_j$ are non-decreasing (by reordering the neighborhood of $u_j$ accordingly).
We show in Lemma~\ref{lem:neg_regression} that the variables $(X_j)_{j\in B}$ satisfy the negative regression condition (cf.~Definition~\ref{def:neg_regression}).
Additionally, Claim~\ref{clm:pr_exp_identities} relates the (conditioned) probabilities of the events $\mathscr{E}_{t'}$ and $\mathscr{E}_{t'+1}$ to the expectations over the different $h_j(X_j)$.
With this, we get
\begin{equation*}
\begin{aligned}
\Pr{\mathscr{E}_{t'+1}|c(t')=c}
&\stackrel{Clm.~\ref{clm:pr_exp_identities}\ref{clm:pr_exp_identities:a}}{=}\Ex{\prod_{j\in B}h_j(X_j)|c(t')=c}\\
&\disguisemath{\stackrel{Clm.~\ref{clm:pr_exp_identities}\ref{clm:pr_exp_identities:a}}{=}{}}{\stackrel{Lem.~\ref{lem:neg_regression_exp_inequality}}{\leq}}\prod_{j\in B}\Ex{h_j(X_j)|c(t')=c}\\
&\stackrel{Clm.~\ref{clm:pr_exp_identities}\ref{clm:pr_exp_identities:b}}{=}\Pr{\mathscr{E}_{t'}|c(t')=c}
.
\end{aligned}
\end{equation*}
Using the law of total probability, we conclude $\Pr{\mathscr{E}_{t'+1}}\leq\Pr{\mathscr{E}_{t'}}$, as required.
\end{proof}

\begin{lemma}\label{lem:neg_regression}
Fix a time $t'<t$ and an arbitrary configuration $c$.
Let $X_j$ be as in the proof of Lemma~\ref{lem:diff_majorized_by_rw}.
Then the vector $(X_j)_{j\in\intcc{\gamma n}}$ satisfies the negative regression condition (\NRC).
\end{lemma}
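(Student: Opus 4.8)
The plan is to exploit the product structure of the randomness used in step $t'+1$. Once the configuration $c(t')=c$ is fixed, the destination of a token $j$ depends only on the permutation $\pi_{u_j}$ chosen by its current host $u_j$, and the permutations $(\pi_u)_{u\in V}$ are mutually independent. Hence $(X_j)_{j\in\intcc{\gamma n}}$ is a concatenation of mutually independent blocks, one per node $u$. The negative regression condition is preserved under concatenating independent random vectors: conditioning on and averaging over the coordinates of all but one block, monotonicity of the conditional expectation of a non-decreasing $f$ is inherited from that single block (a routine check in the spirit of \cite{DBLP:journals/rsa/DubhashiR98}). Thus it suffices to prove \NRC\ for the block of tokens residing on a single node.

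Fix a node $u$ with ordered neighbourhood $N(u,1),\dots,N(u,m)$ (where $m=d(u)+1$ and $u$ is counted among its own neighbours), slots $S_1,\dots,S_m$ with $\lvert S_i\rvert=s_i\coloneqq\rw_{t'+1}[u,N(u,i)]\cdot\gamma$, and re-index $u$'s tokens as $1,\dots,\gamma$. Unwinding the definition of $X_j$, the vector $(X_1,\dots,X_\gamma)$ is distributed exactly as the symbol sequence of a uniformly random arrangement of the multiset that contains $s_i$ copies of the symbol $i$ for each $i\in\intcc{m}$ (each such arrangement arises from $\prod_i s_i!$ of the $\gamma!$ equally likely permutations $\pi_u$). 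The plan is to show that every such ``multiset-arrangement vector'' satisfies \NRC.

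To that end, fix disjoint $\mathscr{L},\mathscr{R}\subseteq\intcc{\gamma}$ and a non-decreasing $f$. Conditioning a multiset-arrangement vector on the values of all coordinates in $\mathscr{R}$ except one distinguished index $r_0$ again yields a multiset-arrangement vector on the remaining $\gamma-\lvert\mathscr{R}\rvert+1$ tokens, with the multiplicities of the conditioned-on symbols decremented. Consequently it is enough to show that $\Ex{f(X_l,\,l\in\mathscr{L})\mid X_{r_0}=a}$ is non-increasing in $a$ for a single index $r_0\notin\mathscr{L}$. This is the crux. Given $X_{r_0}=a$, the other $\gamma-1$ tokens form a uniformly random arrangement of the multiset $M_a$ obtained from the original by deleting one copy of the symbol $a$; for $p<q$ the multiset $M_p$ is obtained from $M_q$ by upgrading one copy of $p$ to a copy of $q$. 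I will couple a uniform arrangement $\sigma$ of $M_q$ with one of $M_p$ by picking, uniformly at random, one of the $s_p$ positions of $\sigma$ carrying the symbol $p$ and relabelling it to $q$, obtaining $\sigma'$. A one-line count verifies that this is measure-preserving onto the uniform law on arrangements of $M_p$ (the number of arrangements of $M_q$ times $s_p$ equals the number of arrangements of $M_p$ times $s_q$, which is exactly the multiplicity with which each $M_p$-arrangement is hit). Since $\sigma'$ and $\sigma$ differ in at most one coordinate, where $\sigma'$ is larger, we have $\sigma'\geq\sigma$ coordinatewise, hence $f(\sigma_l,\,l\in\mathscr{L})\leq f(\sigma'_l,\,l\in\mathscr{L})$ pointwise under the coupling, and taking expectations gives $\Ex{f(X_l,\,l\in\mathscr{L})\mid X_{r_0}=q}\leq\Ex{f(X_l,\,l\in\mathscr{L})\mid X_{r_0}=p}$, i.e.\ the desired monotonicity.

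I expect the main obstacles to be: (i) the measure-preserving verification of the coupling, which must be stated carefully because $f$ is a function of a \emph{vector} of token destinations, so the comparison is genuinely multivariate stochastic dominance (the coupling resolves this only because it changes at most one coordinate, and only upward); and (ii) making precise the two structural reductions, namely stability of the multiset-arrangement family under conditioning on individual coordinates and the inheritance of \NRC\ across independent blocks. The remaining steps are bookkeeping.
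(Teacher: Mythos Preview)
Your proposal is correct and follows the same overall structure as the paper: reduce to a single node via independence of the permutations $(\pi_u)_{u\in V}$, then establish \NRC\ for the resulting multiset-arrangement vector by a coupling that changes at most one coordinate, and only upward in the copy conditioned on the smaller value. The difference lies in the coupling itself. The paper couples the two conditional laws by revealing the remaining $X_j$ \emph{sequentially}, driving both processes with a common uniform variable per token and threshold functions built from the current slot occupancies; it then argues that the two processes can disagree on at most one revealed token, and only in the required direction, treating first the case $x_{\hat r}-\tilde x_{\hat r}=1$ and then reducing the general case to it by reordering slots. Your one-shot relabelling coupling (draw a uniform $M_q$-arrangement $\sigma$, pick a uniformly random $p$-position, upgrade it to $q$) is more direct and sidesteps both the sequential bookkeeping and the adjacent/non-adjacent case split; the measure-preservation count you sketch is exactly right. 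One small point worth making explicit: the monotonicity in the \NRC\ definition is only required at values $(x_r)_{r\in\mathscr{R}}$ of positive probability, which guarantees that after your reduction to $\lvert\mathscr{R}\rvert=1$ the residual multiset still contains at least one copy of both $p$ and $q$, so the relabelling is always well-defined.
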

\begin{proof}
Remember that $u_j$ is the location of token $j$ in configuration $c$ and that $X_j\in\intcc{d_{t'+1}(u_j)+1}$ indicates where token $j$ is sent.
We show for any $u\in V$ that $(X_j)_{j\colon u_j=u}$ satisfies the \NRC.
The lemma's statement follows since the $\pi_{u}$ are chosen independently (if two independent vectors $(X_j)$ and $(Y_j)$ satisfy the \NRC, then so do both together.

Fix a node $u$ and disjoint subsets $\mathscr{L},\mathscr{R}\subseteq\set{j\in\intcc{\gamma n}|u_j=u}$ of tokens on $u$.
Define $d\coloneqq d_{t'+1}(u)$ and let $f\colon\intcc{d+1}^{\abs{\mathscr{L}}}\to\R$ be an arbitrary non-decreasing function.
We have to show that $\Ex{f(X_l,l\in\mathscr{L})|X_r=x_r,r\in\mathscr{R}}$ is non-increasing in each $x_r$ (cf.~Definition~\ref{def:neg_regression}).
That is, we need
\begin{equation}\label{eqn:negreg_via_coupling}
\Ex{f(X_l,l\in\mathscr{L})|X_r=x_r,r\in\mathscr{R}}\leq\Ex{f(X_l,l\in\mathscr{L})|X_r=\tilde{x}_r,r\in\mathscr{R}}
,
\end{equation}
where $x_r=\tilde{x}_r$ holds for all $r\in\mathscr{R}\setminus\set{\hat{r}}$ and $x_{\hat{r}}>\tilde{x}_{\hat{r}}$ for a fixed index $\hat{r}\in\mathscr{R}$.
We prove Inequality~\eqref{eqn:negreg_via_coupling} via a coupling of the processes on the left-hand side (LHS process) and right-hand side (RHS process) of that inequality.
Since $x_{\hat{r}}\neq\tilde{x}_{\hat{r}}$, these processes involve two slightly different probability spaces $\Omega$ and $\tilde{\Omega}$, respectively.
To couple these, we employ a common uniform random variable $U_i\in\intco{0,1}$.
By partitioning $\intco{0,1}$ into $d+1$ suitable slots for each process (corresponding to the slots $S_i$ from the definition of $\mathscr{S}$), we can use the outcome of $U_i$ to set the $X_j$ in both $\Omega$ and $\tilde{\Omega}$.
We first explain how to handle the case $x_{\hat{r}}-\tilde{x}_{\hat{r}}=1$.
The case $x_{\hat{r}}-\tilde{x}_{\hat{r}}>1$ follows from this by a simple reordering argument.

So assume $x_{\hat{r}}-\tilde{x}_{\hat{r}}=1$.
We reveal the yet unset random variables $X_j$ (i.e., $j\in\intcc{\gamma n}\setminus\mathscr{R}$) one by one in order of increasing indices.
To ease the description assume (w.l.o.g.) that the tokens from $\mathscr{R}$ are numbered from $1$ to $\abs{\mathscr{R}}$.
When we reveal the $j$-th variable (which indicates the new location of the $j$-th token), note that the probability $p_{j,i}$ that token $j$ is assigned to $N(u,i)$ depends solely on the \emph{number} of previous tokens $j'<j$ that were assigned to $N(u,i)$.
Thus, we can consider $p_{j,i}\colon\N\to\intcc{0,1}$ as a function mapping $x\in\N$ to the probability that $j$ is assigned to $N(u,i)$ conditioned on the event that exactly $x$ previous tokens were assigned to $N(u,i)$.
Note that $p_{j,i}$ is non-increasing.
For a vector $\bm{x}\in\N^{d+1}$, we define a threshold function $T_{j,i}\colon\N^{d+1}\to\intcc{0,1}$ by $T_{j,i}(\bm{x})\coloneqq\sum_{i'\leq i}p_{j,i'}(x_{i'})$ for each $i\in\intcc{d+1}$.
To define our coupling, let $\beta_{j,i}\coloneqq\abs{\set{j'<j|X_{j'}=i}}$ denote the number of already revealed variables with value $i$ in the LHS process and define, similarly, $\tilde{\beta}_{j,i}\coloneqq\abs{\set{j'<j|\tilde{X}_{j'}=i}}$ for the RHS process.
We use $\bm{\beta_j},\bm{\tilde{\beta}_j}\in\N^{d+1}$ to denote the corresponding vectors.
Now, to assign token $j$ we consider a uniform random variable $U_j\in\intco{0,1}$ and assign $j$ in both processes using customized partitions of the unit interval.
To this end, let $T_{j,i}\coloneqq T_{j,i}(\bm{\beta_j})$ and $\tilde{T}_{j,i}\coloneqq T_{j,i}(\bm{\tilde{\beta}_j})$ for each $i\in\intcc{d+1}$.
We assign $X_j$ in the LHS and RHS process as follows:
\begin{itemize}
\item \textbf{LHS Process:} $X_j=x_j=i$ if and only if $U_j\in\intco{T_{j,i-1},T_{j,i}}$,
\item \textbf{RHS Process:} $X_j=\tilde{x}_j=i$ if and only if $U_j\in\intco{\tilde{T}_{j,i-1},\tilde{T}_{j,i}}$.
\end{itemize}
See Figure~\ref{fig:neg_regression_coupling} for an illustration.
Our construction guarantees that, considered in isolation, both the LHS and RHS process behave correctly.

\begin{figure}
\includegraphics[width=\linewidth]{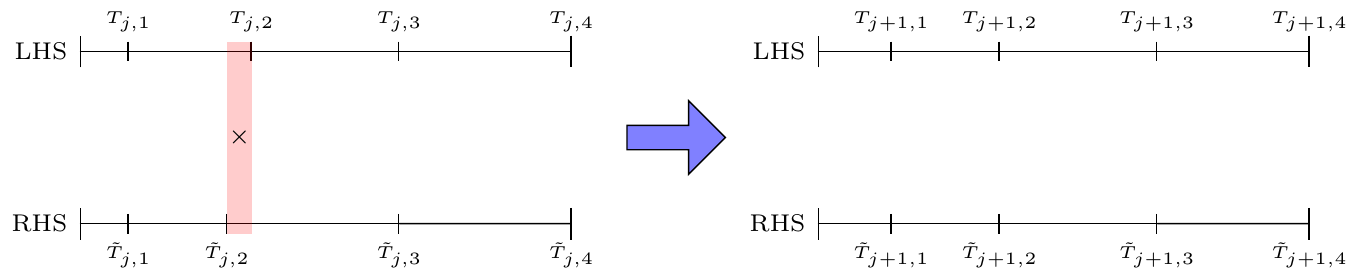}
%\includestandalone[width=\linewidth]{majorization_coupling}
\caption{%
	Illustration showing the $d+1=4$ different slots for the LHS and RHS process and how they change.
	In this example, $x_{\hat{r}}=3$ and $\tilde{x}_{\hat{r}}=2$.
	On the left, the uniform random variable $U_j$ falls into slot $\intco{T_1,T_2}$ for the LHS process (causing $j$ to be sent to node $N(u,2)$) and into slot $\intco{\tilde{T}_2,\tilde{T}_3}$ for the RHS process (causing $j$ to be sent to node $N(u,3)$).}
\label{fig:neg_regression_coupling}
\end{figure}

At the beginning of this coupling, only the variables $X_r$ corresponding to tokens $r\in\mathscr{R}$ are set, and these differ in the LHS and RHS process only for the index $\hat{r}\in\mathscr{R}$, for which we have $X_{\hat{r}}=x_{\hat{r}}$ (LHS) and $X_{\hat{r}}=\tilde{x}_{\hat{r}}=x_{\hat{r}}-1$ (RHS).
Let $\iota\coloneqq x_{\hat{r}}$.
For the first revealed token $j=\hat{r}+1$, this implies $\beta_{j,\iota}=\tilde{\beta}_{j,\iota}+1$, $\beta_{j,\iota-1}=\tilde{\beta}_{j,\iota-1}-1$, and $\beta_{j,i}=\tilde{\beta}_{j,i}$ for all $i\not\in\set{\iota,\iota-1}$.
By the definitions of the slots for both processes, we get $T_{j,i}=\tilde{T}_{j,i}$ for all $i\neq\iota-1$ and $T_{j,\iota-1}>\tilde{T}_{j,\iota-1}$ (cf.~Figure~\ref{fig:neg_regression_coupling}).
Thus, the LHS and RHS process behave different if and only if $U_i\in\intco{\tilde{T}_{j,\iota-1},T_{j,\iota-1}}$.
If this happens, we get $x_j<\tilde{x}_j$ (i.e., token $j$ is assigned to a smaller neighbor in the LHS process).
This implies $\bm{\beta_{j+1}}=\bm{\tilde{\beta}_{j+1}}$ and both processes behave identical from now on.
Otherwise, if $U_i\not\in\intco{\tilde{T}_{j,\iota-1},T_{j,\iota-1}}$, we have $\bm{\beta_{j+1}}-\bm{\beta_{j+1}}=\bm{\beta_j}-\bm{\beta_j}$ and we can repeat the above argument.
Thus, after all $X_j$ are revealed, there is at most one $j\in\mathscr{L}$ for which $x_j\neq\tilde{x}_j$, and for this we have $x_j<\tilde{x}_j$.
Since $f$ is non-decreasing, this guarantees Inequality~\eqref{eqn:negreg_via_coupling}.
To handle the case $x_{\hat{r}}-\tilde{x}_{\hat{r}}>1$, note that we can reorder the slots $\intco{T_{j,i-1},T_{j,i}}$ used for the assignment of the variables such that the slots for $x_{\hat{r}}$ and $\tilde{x}_{\hat{r}}$ are neighboring.
Formally, this merely changes in which order we consider the neighbors in the definition of the functions $T_{j,i}$.
With this change, the same arguments as above apply.
\end{proof}
With the above tools, we finally are able to prove the Chernoff bound stated in Lemma~\ref{lem:chernoff_for_shuffle}.
\begin{proof}[Proof of Lemma~\ref{lem:chernoff_for_shuffle}]
Let $v_{j,t}$ denote the location of token $j$ at time $(t-1)\cdot\tmix$.
For all $t\leq T$ and $\ell\in\N$ define the random indicator variable $Y_{j,t}$ to be $1$ if and only if the random walk starting at $v_{j,t}$ is at node $u$ after $\tmix$ time steps.
By Lemma~\ref{lem:diff_majorized_by_rw} we have for each $B'\subseteq B$ and $t\leq T$ that
\begin{equation}
\Pr{\bigcap_{i\in B'}X_{j,t}=1}\leq\prod_{j\in B'}\Pr{Y_{j,t}=1}
.
\end{equation}
Hence for all $t\leq T$ and $\ell\in\N$ we have $\Pr{\sum_{j\in B}X_{j,t}\geq\ell}\leq\Pr{\sum_{j\in B}Y_{j,t}\geq\ell}$ and
\begin{equation}
\Pr{X\geq\ell}=\Pr{\sum_{t\leq T}\sum_{j\in B}X_{j,t}\geq\ell}\leq\Pr{\sum_{t\leq T}\sum_{j\in B}Y_{j,t}\geq\ell}
.
\end{equation}
Let us define $p\coloneqq1/n+1/n^5$.
By the definition of $\tmix$, we have for all $j\in B$ and $t\leq T$ that
\begin{equation}
\Pr{Y_{j,t}=1|Y_{1,1},Y_{2,1},\dots,Y_{\abs{B},1},Y_{1,2},\dots,Y_{j-1,t}}\leq p
.
\end{equation}
Combining our observations with Lemma~\ref{BoundedChernoff} (see below), we get $\Pr{X\geq\ell}\leq\Bin(T\cdot\abs{B},p)$.
Recall that $\mu=T\cdot\abs{B}\cdot p$.
Thus, by applying standard Chernoff bounds we get
\begin{equation}
\Pr{X\geq(1+\delta)\mu}\leq\left(\frac{e^{\delta}}{(1+\delta)^{1+\delta}}\right)^{\mu}\leq e^{\delta^2\mu/3}
,
\end{equation}
which yields the desired statement.
\end{proof}
\begin{lemma}[{\cite[Lemma~3.1]{DBLP:journals/siamcomp/AzarBKU99}}]\label{BoundedChernoff}
Let $X_1,X_2,\dots,X_n$ be a sequence of random variables with values in an arbitrary domain and let $Y_1,Y_2,\dots,Y_n$ be a sequence of binary random variables with the property that $Y_i=Y_i(X_1,\dots,X_i)$.
If $\Pr{Y_i=1|X_1,\dots,X_{i-1}}\leq p$, then
\begin{align}
\Pr{\sum Y_i\geq\ell}\leq\Pr{\Bin(n,p)\geq\ell}
\end{align}
and, similarly, if $\Pr{Y_i=1|X_1,\dots,X_{i-1}}\geq p$, then
\begin{align}
\Pr{\sum Y_i\leq\ell}\leq\Pr{\Bin(n,p)\leq\ell}
.
\end{align}
Here, $\Bin(n,p)$ denotes the binomial distribution with parameters $n$ and $p$.
\end{lemma}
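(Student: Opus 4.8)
The statement is a stochastic-domination bound, so the plan is to couple the sequence $(Y_i)$ with an auxiliary \emph{independent} sequence $(Z_i)$ of $\Bin(1,p)$ random variables so that $Y_i\le Z_i$ holds surely (for the first inequality), respectively $Z_i\le Y_i$ holds surely (for the second). Granting such a coupling, $\sum_i Y_i\le\sum_i Z_i$ (resp.\ $\sum_i Z_i\le\sum_i Y_i$) holds pointwise, $\sum_i Z_i$ is distributed as $\Bin(n,p)$, and the inclusion $\set{\sum Y_i\geq\ell}\subseteq\set{\sum Z_i\geq\ell}$ immediately yields $\Pr{\sum Y_i\geq\ell}\leq\Pr{\Bin(n,p)\geq\ell}$; the lower-tail statement is completely symmetric. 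We may assume $0<p<1$, since for $p\in\set{0,1}$ the binomial is deterministic and both bounds are trivial.

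To construct the coupling for the upper-tail claim I would enlarge the probability space by independent uniforms $U_1,\dots,U_n$ on $\intco{0,1}$ that are also independent of $(X_1,\dots,X_n)$, and then reveal $X_1,X_2,\dots$ one at a time. After $X_1,\dots,X_i$ have been revealed, $Y_i$ is determined and $q_i\coloneqq\Pr{Y_i=1|X_1,\dots,X_{i-1}}$ is a function of $X_1,\dots,X_{i-1}$ satisfying $q_i\le p$. I then set $Z_i\coloneqq1$ whenever $Y_i=1$, and in the case $Y_i=0$ I set $Z_i\coloneqq1$ exactly when $U_i<\frac{p-q_i}{1-q_i}$ (this threshold lies in $\intcc{0,1}$ because $0\le q_i\le p<1$). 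By construction $Y_i\le Z_i$ always, and a short conditioning computation — using that $U_i$ is independent of $X_i$ and of everything revealed before step $i$ — gives $\Pr{Z_i=1|X_1,\dots,X_{i-1},U_1,\dots,U_{i-1}}=q_i+(1-q_i)\cdot\frac{p-q_i}{1-q_i}=p$. For the lower-tail claim the dual construction works: set $Z_i\coloneqq0$ when $Y_i=0$ and, when $Y_i=1$, set $Z_i\coloneqq1$ iff $U_i<\frac{p}{q_i}$ (well defined since now $q_i\ge p>0$); then $Z_i\le Y_i$ surely and the same computation yields conditional success probability $p$.

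The one step that needs care is upgrading \enquote{each $Z_i$ has conditional success probability $p$} to \enquote{$(Z_1,\dots,Z_n)$ is a vector of independent $\Bin(1,p)$ variables}. Here I would note that $Z_1,\dots,Z_{i-1}$ are each measurable functions of $X_1,\dots,X_{i-1}$ and $U_1,\dots,U_{i-1}$, hence lie in the $\sigma$-field on which the displayed conditional probability was computed; therefore conditioning additionally on $Z_1,\dots,Z_{i-1}$ leaves it unchanged, i.e.\ $\Pr{Z_i=1|Z_1,\dots,Z_{i-1}}=p$ for every history, which is exactly independence with the right marginal. With that, the proof is pure bookkeeping: feed the pointwise inequality between $\sum Y_i$ and $\sum Z_i\sim\Bin(n,p)$ through $\Pr{\cdot\geq\ell}$, respectively $\Pr{\cdot\leq\ell}$.

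I expect the only real obstacle to be the measure-theoretic hygiene in the previous paragraph: since the $X_i$ may take values in an arbitrary domain, one must keep the filtration generated by $X_1,\dots,X_{i-1},U_1,\dots,U_{i-1}$ carefully separated from the not-yet-revealed $X_j$'s, so that the auxiliary uniforms genuinely remain independent of the future and the $Z_i$ come out jointly independent rather than merely having the right conditional marginals. The remaining ingredients — the threshold being a valid probability, the pointwise domination, and the symmetry of the two tails — are routine. (If one only needed the weaker moment-generating-function comparison $\Ex{e^{\theta\sum Y_i}}\le\Ex{e^{\theta\,\Bin(n,p)}}$, which already suffices for the way this lemma is used in the proof of Lemma~\ref{lem:chernoff_for_shuffle}, it would follow from a one-line induction on $n$ using $\Ex{e^{\theta Y_i}|X_1,\dots,X_{i-1}}=1-q_i+q_ie^{\theta}\le1-p+pe^{\theta}$ for $\theta>0$; but the stochastic-domination statement as phrased in the lemma genuinely calls for the coupling above.)
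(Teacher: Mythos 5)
The paper does not actually prove this lemma — it is imported verbatim, with citation, as Lemma~3.1 of Azar, Broder, Karlin and Upfal, and is used as a black box in the proof of Lemma~\ref{lem:chernoff_for_shuffle} — so there is no in-paper argument to compare against; what you have written is a self-contained proof of the cited result. Your coupling argument is correct: after discarding the trivial cases $p\in\set{0,1}$, the thresholds $(p-q_i)/(1-q_i)$ and $p/q_i$ are well defined and lie in $\intcc{0,1}$ under the respective hypotheses $q_i\le p$ and $q_i\ge p$; the domination $Y_i\le Z_i$ (resp.\ $Z_i\le Y_i$) holds surely by construction; and your tower-property step correctly upgrades the conditional marginal $\Pr{Z_i=1|X_1,\dots,X_{i-1},U_1,\dots,U_{i-1}}=p$ to joint independence of $(Z_1,\dots,Z_n)$, precisely because each $Z_j$ with $j<i$ is measurable with respect to that conditioning $\sigma$-field, so the product formula for $\Pr{Z_1=z_1,\dots,Z_n=z_n}$ factors as $\prod_i p^{z_i}(1-p)^{1-z_i}$. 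The only point I would insist you make explicit is the one you already flag: $q_i$ must be taken as a (regular version of the) conditional probability, i.e.\ a genuine measurable function of $X_1,\dots,X_{i-1}$, and the hypothesis $q_i\le p$ is to be read almost surely; with the $U_i$ chosen independent of the entire vector $(X_1,\dots,X_n)$, independence of $U_i$ from the "future" is automatic. Your closing observation is also accurate: for the way the lemma is used in the proof of Lemma~\ref{lem:chernoff_for_shuffle} (an upper-tail multiplicative Chernoff bound), the moment-generating-function comparison would already suffice, but the stochastic-domination statement as phrased does call for the coupling, and your construction is the standard way to obtain it.
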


% !TEX TS-program = pdflatexmk
% !TEX encoding = UTF-8 Unicode
% !TEX spellcheck = English (United States)
% !TEX root = DistributedExactMajority.tex

\section{Protocol \balance}\label{sec:balance}
\myparagraph{Protocol Description}
The idea of our \balance protocol is quite simple:
Every node $u$ stores a $k$-dimensional vector $\bm{\ell_t(u)}$ with $k$ integer entries, one for each opinion.
\balance simply performs an entry-wise load balancing on $\bm{\ell_t(u)}$ according to the communication pattern $\mat=(\mat_t)_{t\leq N}$ and the corresponding transition matrices $\rw_t$ (cf.~Section~\ref{subsec:random_walk}).
Once the load is properly balanced, the nodes look at their largest entry and assume that this is the plurality opinion (stored in the variable $\majest{u}$).

In order to ensure a low memory footprint, we must not send fractional loads over the active edges.
To this end, we use a rounding scheme from~\cite{DBLP:journals/jcss/BerenbrinkCFFS15,DBLP:conf/focs/SauerwaldS12}, which works as follows:
Consider a dimension $i\in\intcc{k}$ and let $\ell_{i,t}(u)\in\N$ denote the current (integral) load at $u$ in dimension $i$.
Then $u$ sends $\floor{\ell_{i,t}(u)\cdot\rw_t[u,v]}$ tokens to all neighbors $v$ with $\mat_t[u,v]=1$.
This results in at most $d_t(u)$ remaining \emph{excess tokens} ($\ell_{i,t}(u)$ minus the total number of tokens sent out).
These are then randomly distributed (without replacement), where neighbor $v$ receives a token with probability $\rw_t[u,v]$.
In the following we call the resulting balancing algorithm \emph{vertex-based balancing} algorithm.
The formal description of protocol \balance is given in Listing~\ref{alg:balance}.

\begin{lstlisting}[float,label={alg:balance},caption={%
	Protocol \balance as executed by node $u$ at time $t$.
	At time zero, each node initializes $\ell_{o_u,0}(u)\coloneqq\gamma$ and $\ell_{j,0}(u)\coloneqq0$ for all $j\neq o_u$.
}]
for $i\in\intcc{k}$:
  for $\{u,v\}\in E$ with $\mat_t[u,v]=1$:
    send $\floor{{\ell_{i,t}(u)\cdot\rw_t[u,v]}}$ tokens from dimension $i$ to $v$
  $x\coloneqq\ell_{i,t}(u)-\sum_{v\colon\mat_t[u,v]=1}\floor{{\ell_{i,t}(u)\cdot\rw_t[u,v]}}$       {excess tokens}
  randomly distribute $x$ tokens such that:
    every $v\neq v$ with $\mat_t[u,v]=1$ receives $1$ token w.p. $\rw_t[u,v]$
    (and zero otherwise)
$\majest{u}\coloneqq i$ with $\ell_{i,t}(u)\geq\ell_{j,t}(u)\quad\forall 1\leq i,j\leq k$            {plurality guess}
\end{lstlisting}

\myparagraph{Analysis of \balance}
Consider initial load vectors $\bm{\ell_0}$ with $\norm{\bm{\ell_0}}_{\infty}\leq n^5$.
Let $\tau\coloneqq\tau(g,\mat)$ be the first time step when \vbased under the (fixed) communication pattern $\mat=(\mat_t)_{t\leq N}$ is able to balance any such vector $\bm{\ell_0}$ up to a $g$-discrepancy (i.e., the minimal $t$ with $\disc(\bm{\ell_t})\leq g$).
With these definitions, one can easily prove the following theorem.
\begin{theorem}\label{thm:parallelbalancing}
Let $\alpha = \frac{n_1-n_2}{n}\in\intcc{\sfrac{1}{n},1}$ denotes the initial bias.
Consider a fixed communication pattern $\mat=(\mat_t)_{t\leq N}$ and an integer $\gamma\in\intcc{3\cdot\frac{g}{\alpha},n^5}$.
Protocol \balance ensures that all nodes know the plurality opinion after $\tau(g,\mat)$ rounds and requires $k\cdot\log(\gamma)$ memory bits per node.
\end{theorem}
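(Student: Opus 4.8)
The plan is to analyse each of the $k$ coordinates of the load vectors in isolation. Fix a dimension $i\in\intcc{k}$ and write $\bm{\ell^{(i)}}$ for the corresponding entry across all nodes. By the initialisation in Listing~\ref{alg:balance}, $\bm{\ell_0^{(i)}}$ carries $\gamma$ on each of the $n_i$ nodes of opinion $i$ and $0$ everywhere else; in particular $\norm{\bm{\ell_0^{(i)}}}_{\infty}=\gamma\le n^5$, so $\bm{\ell_0^{(i)}}$ is an admissible input for the guarantee defining $\tau(g,\mat)$. The first thing to record is that \balance runs exactly one copy of \vbased per dimension, and that \vbased is mass conserving: every token a node ships out — whether via a floor $\floor{\ell_{i,t}(u)\rw_t[u,v]}$ or via the random distribution of its $\le d_t(u)$ excess tokens — is received by some node, and none is created or destroyed. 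Hence $\sum_{u\in V}\ell_{i,t}(u)=\gamma\, n_i$ for every $t$, so the average load in dimension $i$ is $\bar\ell_i\coloneqq\gamma n_i/n$ at all times. Now invoking the definition of $\tau\coloneqq\tau(g,\mat)$, at time $\tau$ the discrepancy of dimension $i$ is at most $g$; since $\min_u\ell_{i,\tau}(u)\le\bar\ell_i\le\max_u\ell_{i,\tau}(u)$ and $\max-\min\le g$, this yields the pointwise bound $\abs{\ell_{i,\tau}(u)-\bar\ell_i}\le g$ for every node $u$ and every dimension $i$.

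The core step is then the separation of the plurality opinion. For every node $u$,
\begin{equation*}
\ell_{1,\tau}(u)\ \ge\ \frac{\gamma n_1}{n}-g
\qquad\text{and}\qquad
\ell_{i,\tau}(u)\ \le\ \frac{\gamma n_i}{n}+g\ \le\ \frac{\gamma n_2}{n}+g\quad(i\ge 2),
\end{equation*}
so that $\ell_{1,\tau}(u)-\ell_{i,\tau}(u)\ge \gamma(n_1-n_2)/n-2g=\gamma\alpha-2g\ge g>0$, where the final inequality uses $\gamma\ge 3g/\alpha$ (and $g\ge 1$). Thus the \emph{unique} largest coordinate of $u$'s load vector at time $\tau$ is the first one, so the last line of Listing~\ref{alg:balance} sets $\majest{u}=1$ at every node $u$; i.e.\ all nodes know the plurality opinion after $\tau(g,\mat)$ rounds.

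For the memory bound, each node maintains exactly $k$ nonnegative integer counters; by mass conservation each stays polynomially bounded, and at the decision time $\tau$ each is at most $\bar\ell_i+g\le\gamma+g\le 2\gamma$, so $\LDAUOmicron{\log\gamma}$ bits per counter — hence $k\cdot\LDAUOmicron{\log\gamma}$ bits per node — suffice. I do not expect a deep obstacle (the paper rightly calls this easy); the care goes into the bookkeeping. Specifically one should (i) spell out the mass-conservation invariant for the \emph{rounded} balancer, in particular that the excess never exceeds $d_t(u)$ so nothing "leaks"; (ii) make precise in what sense "after $\tau(g,\mat)$ rounds" is meant — this silently uses that once \vbased reaches discrepancy $g$ it does not leave that regime, so that the decision taken at (or after) time $\tau$ is correct; and (iii) note that if one insists on a memory bound valid throughout the execution rather than at the decision point, the worst-case load on a node can grow by $\LDAUOmicron{\Delta}$ per step, giving $\LDAUOmicron{\log\gamma+\log n}$ bits per counter. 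Apart from these points, the argument is exactly the per-dimension conservation-plus-discrepancy computation above.
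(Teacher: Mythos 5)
Your proposal is correct and follows essentially the same route as the paper: interpret each dimension as an independent instance of the vertex-based balancer, use mass conservation to identify the per-dimension average $\bar\ell_i=\gamma n_i/n$, apply the $g$-discrepancy guarantee at time $\tau(g,\mat)$ to get $\ell_{1,\tau}(u)\geq\bar\ell_1-g$ and $\ell_{i,\tau}(u)\leq\bar\ell_i+g$, and conclude separation from $\gamma\geq 3g/\alpha$. The extra bookkeeping you flag (mass conservation of the rounded balancer, persistence of the low-discrepancy regime, and the in-execution memory caveat) is sound but not part of the paper's argument, which stops at the separation computation.
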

\begin{proof}
Recall that $\gamma\geq3\frac{g}{\alpha}=3g\cdot\frac{n}{n_1-n_2}$.
For $i\in\intcc{k}$ let $\bar\ell_i\coloneqq n_i\cdot\gamma/n$.
The definition of $\tau(g,\mat)$ implies $\ell_{1,t}(u)\geq\bar\ell_1-g$ and $\ell_{i,t}(u)\leq\bar\ell_i+g$ for all nodes $u$ and $i\geq2$.
Consequently, we get
\begin{equation}
\ell_{1,t}(u)-\ell_{i,t}(u)\geq\bar\ell_1-\bar\ell_i-2g=3g\cdot\frac{n_1-n_i}{n_1-n_2}-2g>0
.
\end{equation}
Thus, every node $u$ has the correct plurality guess at time $t$.
\end{proof}
The memory usage of \balance depends on the number of opinions ($k$) and on the number of tokens generated on every node ($\gamma$).
The algorithm is very efficient for small values of $k$ but it becomes rather impractical if $k$ is large.
Note that if one chooses $\gamma$ sufficiently large, it is easy to adjust the algorithm such that every node knows the frequency of \emph{all} opinions in the network.
The following corollary gives a few concrete examples for common communication patterns on general graphs.

% Initial load \gamma: \leq n^5 3g / \alpha \leq \gamma \leq n^5
% For part (c) and (d) it depends what we choose c_2 to be in Thm 1 of Sauerwald/sun.
% Constants come out of the proof.
% Shows they exists not what they are.
% Other constants would need to be exposed from the load balancing proofs.

\begin{corollary}\label{mastercor} Let $G$ be an arbitrary $d$-regular graph.
\balance ensures that all nodes agree on the plurality opinion with probability $1-e^{-(\log(n))^c}$ for some constant $c$
\begin{enumerate}
\item\label{mastercor:a} using $\LDAUOmicron{k\cdot\log n}$ bits of memory in time $\LDAUOmicron[small]{\frac{\log n}{1-\lambda_2}}$ in the diffusion model,
\item\label{mastercor:b} using $\LDAUOmicron{k\cdot\log n}$ bits of memory in time $\LDAUOmicron[small]{\frac{1}{d\cdot\pmin}\cdot\frac{\log n}{1-\lambda_2}}$ in the random matching model,
\item\label{mastercor:c} using $\LDAUOmicron{k\cdot\log(\alpha^{-1})}$ bits of memory in time $\LDAUOmicron[small]{d\cdot\frac{\log n}{1-\lambda_2}}$ in the balancing circuit model, and
\item\label{mastercor:d} using $\LDAUOmicron{k\cdot\log(\alpha^{-1})}$ bits of memory in time $\LDAUOmicron[small]{n\cdot\frac{\log n}{1-\lambda_2}}$ in the sequential model.
\end{enumerate}
\end{corollary}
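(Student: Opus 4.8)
The plan is to obtain Corollary~\ref{mastercor} directly from Theorem~\ref{thm:parallelbalancing} by substituting, in each of the four communication models, a known bound on $\tau(g,\mat)$ --- the number of rounds the \vbased algorithm needs to reduce the discrepancy to $g$ on a $d$-regular graph when started from any load vector of $\infty$-norm at most $n^5$. Such discrepancy--time trade-offs for the vertex-based (discrete) rounding scheme are exactly the ones established in~\cite{DBLP:conf/focs/SauerwaldS12,DBLP:journals/jcss/BerenbrinkCFFS15,RSW98} and recalled earlier in the discussion of discrete load balancing. Once such a bound is fixed, Theorem~\ref{thm:parallelbalancing} says it suffices to choose any $\gamma\in\intcc{3g/\alpha,n^5}$: the run time is then $\tau(g,\mat)$ and the memory is $k\cdot\log\gamma$ bits. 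So the whole argument reduces, in each model, to (i) picking $g$, (ii) picking $\gamma$, and (iii) checking $3g/\alpha\leq n^5$ so that the admissible interval for $\gamma$ is non-empty.

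For the balancing-circuit and sequential models I would take $g=\LDAUOmicron{1}$ a constant. By~\cite{DBLP:conf/focs/SauerwaldS12,RSW98}, \vbased then reaches constant discrepancy in $\LDAUOmicron{d\cdot\log(\gamma n)/(1-\lambda_2)}$ rounds in the balancing-circuit model and $\LDAUOmicron{n\cdot\log(\gamma n)/(1-\lambda_2)}$ rounds in the sequential model. Taking $\gamma=\LDAUTheta{1/\alpha}$ (the smallest admissible value) makes $\log(\gamma n)=\LDAUOmicron{\log n}$ and yields run times $\LDAUOmicron{d\log n/(1-\lambda_2)}$ and $\LDAUOmicron{n\log n/(1-\lambda_2)}$ with $k\cdot\log\gamma=\LDAUOmicron{k\log(\alpha^{-1})}$ memory bits, which is parts~\ref{mastercor:c} and~\ref{mastercor:d}. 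For the diffusion and random-matching models, constant discrepancy is not attainable in $\LDAUOmicron{\log(\gamma n)/(1-\lambda_2)}$ rounds; instead one uses $g=\LDAUOmicron{\sqrt{d\log n}/(1-\lambda_2)}$, which \vbased reaches in $\LDAUOmicron{\log(\gamma n)/(1-\lambda_2)}$ rounds in the diffusion model and, since each random-matching round realises only a $\LDAUTheta{d\cdot\pmin}$ fraction of a diffusion step, in $\LDAUOmicron{\log(\gamma n)/(d\cdot\pmin\cdot(1-\lambda_2))}$ rounds in the random-matching model. Here one must pick $\gamma=\LDAUTheta{g/\alpha}$; since $1-\lambda_2$ is polynomially bounded below for every connected graph, this $\gamma$ is still polynomial in $n$, so $\log\gamma=\LDAUOmicron{\log n}$, the memory is $\LDAUOmicron{k\log n}$ bits, $\log(\gamma n)=\LDAUOmicron{\log n}$ keeps the run times at $\LDAUOmicron{\log n/(1-\lambda_2)}$ and $\LDAUOmicron{\log n/(d\cdot\pmin\cdot(1-\lambda_2))}$, and $\intcc{3g/\alpha,n^5}$ is non-empty for $n$ large; this is parts~\ref{mastercor:a} and~\ref{mastercor:b}.

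The failure probability $e^{-(\log n)^c}$ comes for free: Theorem~\ref{thm:parallelbalancing} is deterministic once \vbased has achieved the target $g$-discrepancy, so the only randomness --- and the only bad event --- is that of the concentration bounds of~\cite{DBLP:conf/focs/SauerwaldS12,DBLP:journals/jcss/BerenbrinkCFFS15} for the vertex-based rounding scheme, whose failure probability already has exactly this shape. I expect the only real work to be bookkeeping rather than new ideas: the cited discrepancy bounds are phrased for the diffusion, balancing-circuit, and sequential models, so one has to confirm that the vertex-based analysis transfers to the random-matching model with the stated $\LDAUOmicron{1/(d\cdot\pmin)}$ slowdown and the same failure probability, and one has to double-check that $\gamma=\LDAUTheta{g/\alpha}$ in parts~\ref{mastercor:a}--\ref{mastercor:b} stays below $n^5$ even when $\alpha=1/n$ and $1-\lambda_2$ is polynomially small (it does, since $g$ itself is then only polynomial in $n$).
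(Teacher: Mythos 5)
Your proposal is correct and follows essentially the same route as the paper: both reduce the corollary to Theorem~\ref{thm:parallelbalancing} and then import the known discrepancy--time bounds for the vertex-based balancer from the load-balancing literature (the paper cites the corresponding theorems of Sauerwald and Sun directly), choosing constant $g$ with $\gamma=\Theta(1/\alpha)$ for the balancing-circuit and sequential models and $g=\LDAUOmicron{\sqrt{d\log n}/(1-\lambda_2)}$ with $\gamma=\Theta(g/\alpha)$ for the diffusion and random-matching models. Your write-up merely makes explicit the choices of $g$ and $\gamma$ and the bookkeeping that the paper's two-sentence proof leaves implicit.
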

\begin{proof}
Part~\ref{mastercor:a} follows directly from~\cite[Theorem~6.6]{DBLP:journals/corr/abs-1201-2715} and Part~\ref{mastercor:c} follows directly from~\cite[Theorem~1.1]{DBLP:journals/corr/abs-1201-2715}.
To show Part~\ref{mastercor:b} and~\ref{mastercor:d} we choose $\tau$ such that $\mat_1,\mat_2,\dots,\mat_{\tau}$ enable \vbased to balance any vector $\bm{\ell_0}$ (with initial discrepancy of at most $n^5$) up to a $g$-\emph{discrepancy}.
The bound on $\tau$ then follows from~\cite[Theorem~1.1]{DBLP:journals/corr/abs-1201-2715}.
\end{proof}
In particular, for the complete graph and $k=2$, Corollary~\ref{mastercor}\ref{mastercor:d} gives the same bounds as the bound from~\cite{DBLP:conf/podc/AlistarhGV15}.
Note that the $s$ states used to measure space requirement in~\cite{DBLP:conf/podc/AlistarhGV15} correspond to $\log s$ memory bits in our model.

%\printbibliography

\bibliographystyle{plain}
\bibliography{dblp}

\begin{thebibliography}{10}

\bibitem{DBLP:journals/dam/AbdullahD15}
Mohammed~Amin Abdullah and Moez Draief.
\newblock Global majority consensus by local majority polling on graphs of a
  given degree sequence.
\newblock {\em Discrete Applied Mathematics}, 180:1--10, 2015.

\bibitem{AF14}
D.~Aldous and J.~Fill.
\newblock Reversible markov chains and random walks on graphs, 2002.
\newblock Unpublished.
  \url{http://www.stat.berkeley.edu/~aldous/RWG/book.html}.

\bibitem{DBLP:conf/podc/AlistarhGV15}
Dan Alistarh, Rati Gelashvili, and Milan Vojnovic.
\newblock Fast and exact majority in population protocols.
\newblock In {\em Proceedings of the 2015 {ACM} Symposium on Principles of
  Distributed Computing, {(PODC)}}, pages 47--56, 2015.

\bibitem{DBLP:journals/dc/AngluinAE08}
Dana Angluin, James Aspnes, and David Eisenstat.
\newblock A simple population protocol for fast robust approximate majority.
\newblock {\em Distributed Computing}, 21(2):87--102, 2008.

\bibitem{DBLP:journals/dc/AngluinAER07}
Dana Angluin, James Aspnes, David Eisenstat, and Eric Ruppert.
\newblock The computational power of population protocols.
\newblock {\em Distributed Computing}, 20(4):279--304, 2007.

\bibitem{DBLP:journals/eatcs/AspnesR07}
James Aspnes and Eric Ruppert.
\newblock An introduction to population protocols.
\newblock {\em Bulletin of the {EATCS}}, 93:98--117, 2007.

\bibitem{DBLP:journals/siamcomp/AzarBKU99}
Yossi Azar, Andrei~Z. Broder, Anna~R. Karlin, and Eli Upfal.
\newblock Balanced allocations.
\newblock {\em SIAM Journal of Computing}, 29(1):180--200, 1999.

\bibitem{DBLP:conf/soda/BecchettiCNPS15}
Luca Becchetti, Andrea Clementi, Emanuele Natale, Francesco Pasquale, and
  Riccardo Silvestri.
\newblock Plurality consensus in the gossip model.
\newblock In {\em Proceedings of the 26th Annual {ACM-SIAM} Symposium on
  Discrete Algorithms (SODA)}, pages 371--390, 2015.

\bibitem{DBLP:conf/spaa/BecchettiCNPST14}
Luca Becchetti, Andrea E.~F. Clementi, Emanuele Natale, Francesco Pasquale,
  Riccardo Silvestri, and Luca Trevisan.
\newblock Simple dynamics for plurality consensus.
\newblock In {\em 26th {ACM} Symposium on Parallelism in Algorithms and
  Architectures, {(SPAA)}}, pages 247--256, 2014.

\bibitem{BCNPT15}
Luca Becchetti, Andrea E.~F. Clementi, Emanuele Natale, Francesco Pasquale, and
  Luca Trevisan.
\newblock Stabilizing consensus with many opinions.
\newblock In {\em Proceedings of the Twenty-Sixth Annual ACM-SIAM Symposium on
  Discrete Algorithms (SODA)}. SIAM, 2016.

\bibitem{DBLP:journals/jcss/BerenbrinkCFFS15}
Petra Berenbrink, Colin Cooper, Tom Friedetzky, Tobias Friedrich, and Thomas
  Sauerwald.
\newblock Randomized diffusion for indivisible loads.
\newblock {\em J. Comput. Syst. Sci.}, 81(1):159--185, 2015.

\bibitem{DBLP:journals/tit/BoydGPS06}
Stephen~P. Boyd, Arpita Ghosh, Balaji Prabhakar, and Devavrat Shah.
\newblock Randomized gossip algorithms.
\newblock {\em {IEEE} Transactions on Information Theory}, 52(6):2508--2530,
  2006.

\bibitem{CC14}
Luca Cardelli and Attila Csik{\'a}sz-Nagy.
\newblock The cell cycle switch computes approximate majority.
\newblock {\em Scientific reports}, 2, 2012.

\bibitem{chen2013programmable}
Yuan-Jyue Chen, Neil Dalchau, Niranjan Srinivas, Andrew Phillips, Luca
  Cardelli, David Soloveichik, and Georg Seelig.
\newblock Programmable chemical controllers made from dna.
\newblock {\em Nature nanotechnology}, 8(10):755--762, 2013.

\bibitem{DBLP:journals/tcs/ClementiIGNS15}
Andrea E.~F. Clementi, Miriam~Di Ianni, Giorgio Gambosi, Emanuele Natale, and
  Riccardo Silvestri.
\newblock Distributed community detection in dynamic graphs.
\newblock {\em Theor. Comput. Sci.}, 584:19--41, 2015.

\bibitem{DBLP:journals/siamdm/CooperEOR13}
Colin Cooper, Robert Els{\"{a}}sser, Hirotaka Ono, and Tomasz Radzik.
\newblock Coalescing random walks and voting on connected graphs.
\newblock {\em {SIAM} J. Discrete Math.}, 27(4):1748--1758, 2013.

\bibitem{DBLP:conf/icalp/CooperER14}
Colin Cooper, Robert Els{\"{a}}sser, and Tomasz Radzik.
\newblock The power of two choices in distributed voting.
\newblock In {\em Automata, Languages, and Programming - 41st International
  Colloquium, {(ICALP)}}, pages 435--446, 2014.

\bibitem{DBLP:conf/spaa/DoerrGMSS11}
Benjamin Doerr, Leslie~Ann Goldberg, Lorenz Minder, Thomas Sauerwald, and
  Christian Scheideler.
\newblock Stabilizing consensus with the power of two choices.
\newblock In {\em Proceedings of the 23rd Annual {ACM} Symposium on Parallelism
  in Algorithms and Architectures, {(SPAA)}}, pages 149--158, 2011.

\bibitem{DW83}
Peter Donnelly and Dominic Welsh.
\newblock Finite particle systems and infection models.
\newblock {\em Mathematical Proceedings of the Cambridge Philosophical
  Society}, 94(1):167--182, 1983.

\bibitem{DBLP:journals/siamco/DraiefV12}
Moez Draief and Milan Vojnovic.
\newblock Convergence speed of binary interval consensus.
\newblock {\em {SIAM} J. Control and Optimization}, 50(3):1087--1109, 2012.

\bibitem{DBLP:journals/rsa/DubhashiR98}
Devdatt~P. Dubhashi and Desh Ranjan.
\newblock Balls and bins: {A} study in negative dependence.
\newblock {\em Random Struct. Algorithms}, 13(2):99--124, 1998.

\bibitem{DBLP:journals/jcss/GhoshM96}
Bhaskar Ghosh and S.~Muthukrishnan.
\newblock Dynamic load balancing by random matchings.
\newblock {\em J. Comput. Syst. Sci.}, 53(3):357--370, 1996.

\bibitem{DBLP:journals/iandc/HassinP01}
Yehuda Hassin and David Peleg.
\newblock Distributed probabilistic polling and applications to proportionate
  agreement.
\newblock {\em Inf. Comput.}, 171(2):248--268, 2001.

\bibitem{HL75}
R.~Holley and T.~Liggett.
\newblock Ergodic theorems for weakly interacting infinite systems and the
  voter model.
\newblock {\em The Annals of Probability}, 3(4):643--663, 1975.

\bibitem{DBLP:conf/focs/KDG03}
David Kempe, Alin Dobra, and Johannes Gehrke.
\newblock Gossip-based computation of aggregate information.
\newblock In {\em 44th Symposium on Foundations of Computer Science {(FOCS}
  2003), 11-14 October 2003, Cambridge, MA, USA, Proceedings}, pages 482--491,
  2003.

\bibitem{LN07}
N.~Lanchier and C.~Neuhauser.
\newblock Voter model and biased voter model in heterogeneous environments.
\newblock {\em Journal of Applied Probability}, 44(3):770--787, 2007.

\bibitem{L85}
Thomas Liggett.
\newblock {\em Interacting particle systems}.
\newblock Springer Science \& Business Media, 2012.

\bibitem{Mal14}
F.~Mallmann-Trenn.
\newblock Bounds on the voting time in terms of the conductance.
\newblock Master's thesis, Simon Fraser University, 2014.
\newblock Master's thesis. \url{http://summit.sfu.ca/item/14502}.

\bibitem{DBLP:conf/icalp/MertziosNRS14}
George~B. Mertzios, Sotiris~E. Nikoletseas, Christoforos Raptopoulos, and
  Paul~G. Spirakis.
\newblock Determining majority in networks with local interactions and very
  small local memory.
\newblock In {\em Automata, Languages, and Programming - 41st International
  Colloquium, {(ICALP)}}, pages 871--882, 2014.

\bibitem{DBLP:journals/aamas/MosselNT14}
Elchanan Mossel, Joe Neeman, and Omer Tamuz.
\newblock Majority dynamics and aggregation of information in social networks.
\newblock {\em Autonomous Agents and Multi-Agent Systems}, 28(3):408--429,
  2014.

\bibitem{DBLP:conf/innovations/MosselS10}
Elchanan Mossel and Grant Schoenebeck.
\newblock Reaching consensus on social networks.
\newblock In {\em Innovations in Computer Science - {(ICS)}}, pages 214--229,
  2010.

\bibitem{DBLP:journals/tcs/Peleg02}
David Peleg.
\newblock Local majorities, coalitions and monopolies in graphs: a review.
\newblock {\em Theor. Comput. Sci.}, 282(2):231--257, 2002.

\bibitem{DBLP:conf/infocom/PerronVV09}
Etienne Perron, Dinkar Vasudevan, and Milan Vojnovic.
\newblock Using three states for binary consensus on complete graphs.
\newblock In {\em 28th {IEEE} International Conference on Computer
  Communications, {(INFOCOM)}}, pages 2527--2535, 2009.

\bibitem{RSW98}
Yuval Rabani, Alistair Sinclair, and Rolf Wanka.
\newblock Local divergence of markov chains and the analysis of iterative load
  balancing schemes.
\newblock In {\em 39th Annual Symposium on Foundations of Computer Science,
  {FOCS}}, pages 694--705, 1998.

\bibitem{DBLP:conf/focs/SauerwaldS12}
Thomas Sauerwald and He~Sun.
\newblock Tight bounds for randomized load balancing on arbitrary network
  topologies.
\newblock In {\em 53rd Annual {IEEE} Symposium on Foundations of Computer
  Science, {(FOCS)}}, pages 341--350, 2012.

\bibitem{DBLP:journals/corr/abs-1201-2715}
Thomas Sauerwald and He~Sun.
\newblock Tight bounds for randomized load balancing on arbitrary network
  topologies, 2012.
\newblock full version of FOCS'12.

\end{thebibliography}

\end{document}